\newtheorem{lemma}{Lemma}
\newtheorem{definition}{Definition}
\newtheorem{proposition}{Proposition}
\newtheorem{remark}{Remark}
\definecolor{LightCyan}{rgb}{0.85,1,1}
\newcommand{\totalpower}[1]{%
	\overline{P}_{#1}
}
\begin{document}

\title{Fronthaul-Aware Software-Defined Wireless Networks: Resource Allocation and User Scheduling}


\author{
\IEEEauthorblockN{Chen-Feng Liu,~\IEEEmembership{Student Member,~IEEE,} Sumudu Samarakoon,~\IEEEmembership{Student Member,~IEEE,} 
\\ Mehdi Bennis,~\IEEEmembership{Senior Member,~IEEE,} and H. Vincent Poor,~\IEEEmembership{Fellow,~IEEE}}

\thanks{This research was supported in part by TEKES grant 2364/31/2014,  in part by the Academy of Finland project CARMA,  in part by the Nokia Foundation, and in part by the U.S. National Science Foundation under Grant CNS-1456793.
This paper was presented in part at the IEEE Global Communications Conference Workshops, Washington, D.C., USA, in Dec.~2016 \cite{GC16}.}
\thanks{C.-F. Liu, S. Samarakoon, and M. Bennis are with the Centre for Wireless Communications, University of Oulu, Oulu 90014, Finland (e-mail: chen-feng.liu@oulu.fi; sumudu.samarakoon@oulu.fi; mehdi.bennis@oulu.fi).}

\thanks{H. V. Poor is with the Department of Electrical Engineering, Princeton University, NJ 08544, USA (e-mail: poor@princeton.edu).}
}

\maketitle

\begin{abstract}

Software-defined networking (SDN) provides an agile and programmable way to optimize radio access networks via a control-data plane separation. Nevertheless, reaping the benefits of wireless SDN hinges on making optimal use of the limited wireless fronthaul capacity. In this work, the problem of fronthaul-aware resource allocation and user scheduling is studied. To this end, a two-timescale fronthaul-aware SDN control mechanism is proposed in which the controller maximizes the time-averaged network throughput by enforcing a coarse correlated equilibrium in the long timescale.   Subsequently, leveraging the controller's recommendations, each base station schedules its users  using  Lyapunov stochastic optimization  in the short timescale, i.e., at each time slot. Simulation results show that significant network throughput enhancements and up to 40\% latency reduction are achieved with the aid of the SDN controller. Moreover, the gains are more pronounced for denser network deployments.
\end{abstract}

\begin{IEEEkeywords}
5G, software-defined networking (SDN), coarse correlated equilibrium (CCE), network utility maximization,  Lyapunov optimization.
\end{IEEEkeywords}

\section{Introduction}\label{Sec: Introduction}
\IEEEPARstart{T}{o} satisfy the ever-increasing capacity enhancement requirements, small cell base stations (BSs) are deployed to boost network capacity and offload traffic \cite{Mehdi/SmallCell,Andrew_smallcell,green_smallcell}.
Nevertheless, due to frequency reuse in adjacent cells, transmissions from neighboring BSs result in severe interference. Although locally-coupled BSs may coordinate their transmissions by exchanging information, this incurs non-negligible signaling and overhead, especially for denser networks, and lacks scalability \cite{Chen:14:Mag}. To address these issues, the concept of software-defined networking (SDN) which decouples the control plane\footnote{The control plane is responsible for radio resource management, admission control, power control, mobility management, scheduling, etc.~\cite{Chen:14:Mag}.} from the data forwarding elements, i.e., the data plane, of the network entity is currently considered as a solution to coordinate the transmissions of locally-coupled radio access networks (RANs) \cite{Chen_mag_15,2015:SOF,Kozat_SDN,Gudipati:13:SDN,Le2015,Yang:2015:SVF}. 
In existing wireless SDN architectures, an SDN controller, having the global view of the hierarchy, makes control decisions and issues recommendations to locally-coupled BSs at the lower-level of the hierarchy. As a result, instead of local coordination with other BSs, BSs optimize their local transmissions while following the controller's recommendations. 

\subsection{Related Work}
The authors in \cite{Fu:2013:NVF} consider a virtualized wireless network in which  wireless service providers (WSPs) bid for resources on behalf of the subscribed users with the aid of the central controller, whereby resources are managed through an auction process. In \cite{SDNAUCTION_PIMRC16}, Zhang {\it et al}.~study virtualized wireless networks in which mobile virtual network operators  aim at acquiring resources from infrastructure providers (InPs) for monetary benefits. Therein, the virtualized resources are allocated by the SDN controller via a double auction mechanism. Considering a large-scale network virtualization architecture in  \cite{Xianfu_ICC_17}, the controller allocates the leased InP resources to users using a distributed Markov decision process algorithm. In \cite{Xianfu:SDN}, the central controller  schedules users based on the WSP's value function which depends on other competitors' private information. Further, the authors propose an algorithm to approximately learn the true value function. Additionally, Liu {\it et al.}~study a traffic-offloading problem in software-defined heterogeneous networks \cite{Liu_SDN_15}. Therein, using a distributed alternating direction method of multipliers  approach, the SDN controller helps BSs to offload their data traffic to access points. The authors in \cite{Zhang_SDN_CRAN} adopt the idea of SDN in cloud radio access networks  where the central controller makes  control decisions for user grouping, relay selection, and resource allocation. In  \cite{Kozat:15:Flow},  the SDN controller splits and allocates network flows into multiple paths to avoid congestion and reduce latency.

However, while interesting, the above works neither take into account the impact of the fronthaul nor  address the overhead induced by using a capacity-limited shared fronthaul which negatively impacts the network performance. Clearly, the impact of the fronthaul capacity and latency on the overall network performance cannot be ignored, and warrants a thorough investigation. This constitutes the prime motivation of this work.

\subsection{Our Contribution}

Motivated by these concerns, we propose a fronthaul-aware software-defined control mechanism for  locally-coupled small cell networks in which  BSs compete for resources to maximize their own downlink (DL) rates while satisfying their user equipments' (UEs') quality-of-service (QoS) requirements with respect to the data queue length. Due to the stochastic nature of the wireless channel, queue dynamics, and  coupled BS transmissions, the rate maximization problem can be modeled as a \emph{stochastic game} among BSs, in which the SDN controller acts as a game coordinator issuing  recommendations to players/BSs via an in-band wireless fronthaul.\footnote{Owing to the advantages of spectrum availability, low expenditure, and flexibility of deployment, using the in-band wireless fronthaul has been selected as a 5G architectural enabler \cite{FH_1,FH_2}.} Here, the fronthaul overhead is considered as a time penalty which is a function of the fronthaul signal-to-noise ratio (SNR) and network density.
In order to reduce the incurred overhead in centralized SDN schemes, we propose a two-timescale framework for the fronthaul-aware software-defined control mechanism in which the controller maximizes the network utility based on BSs' uploaded local information and QoS requirements in a slow/long timescale. Additionally, the controller enforces  a game-theoretic equilibrium, i.e., \emph{coarse correlated equilibrium} (CCE),  to incentivize BSs to attain higher performance.
To this end, by marrying tools from Lyapunov stochastic optimization and game theory, the problem is formulated as a network utility maximization problem subject to CCE constraints. To solve it, we propose two approaches in which the controller provides the optimal CCE recommendations (i.e., allocated frequency carriers) to every BS. In turn, each BS schedules its user as a function of queue length and interference levels. In the fast/short timescale, i.e., at each time slot, the BS utilizes the available frequency carriers to schedule and allocate resources to its UEs by invoking tools from Lyapunov stochastic optimization. The contributions of this paper are summarized as follows.
\begin{itemize}
\item
We propose a two-timescale software-defined control solution for resource allocation and user scheduling.
\item Throughput and latency (in terms of queue length) tradeoffs of the fronthaul-aware wireless SDN framework are examined. 
\item 
In contrast with a non-SDN baseline, the proposed fronthaul-aware software-defined mechanism brings about throughput and latency enhancement reaching up to 40\% latency reduction with the aid of the SDN controller. 
\item
 The impact of the fronthaul overhead and reliability (in terms of  fronthaul SNR) on the performance of software-defined RANs is investigated.
\end{itemize}

The rest of this paper is organized as follows.  We first describe the system model in Section \ref{Sec: System architecture}. The stochastic game of the considered network and the network problem are modeled and formulated in Sections \ref{Sec: Stochastic game} and Section \ref{Sec: Problem formulation}, respectively. Subsequently, we detailedly specify the proposed fronthaul-aware software-defined resource allocation and user scheduling mechanism in Section \ref{Sec: controller's task}.
In Section \ref{Sec: results}, we present  the simulation results. Finally, this work is concluded in Section \ref{Sec: conclusion}.

\section{System Model}\label{Sec: System architecture}

\begin{figure}[t]
\centering
\subfigure[Network architecture.]
{\includegraphics[width=\columnwidth]{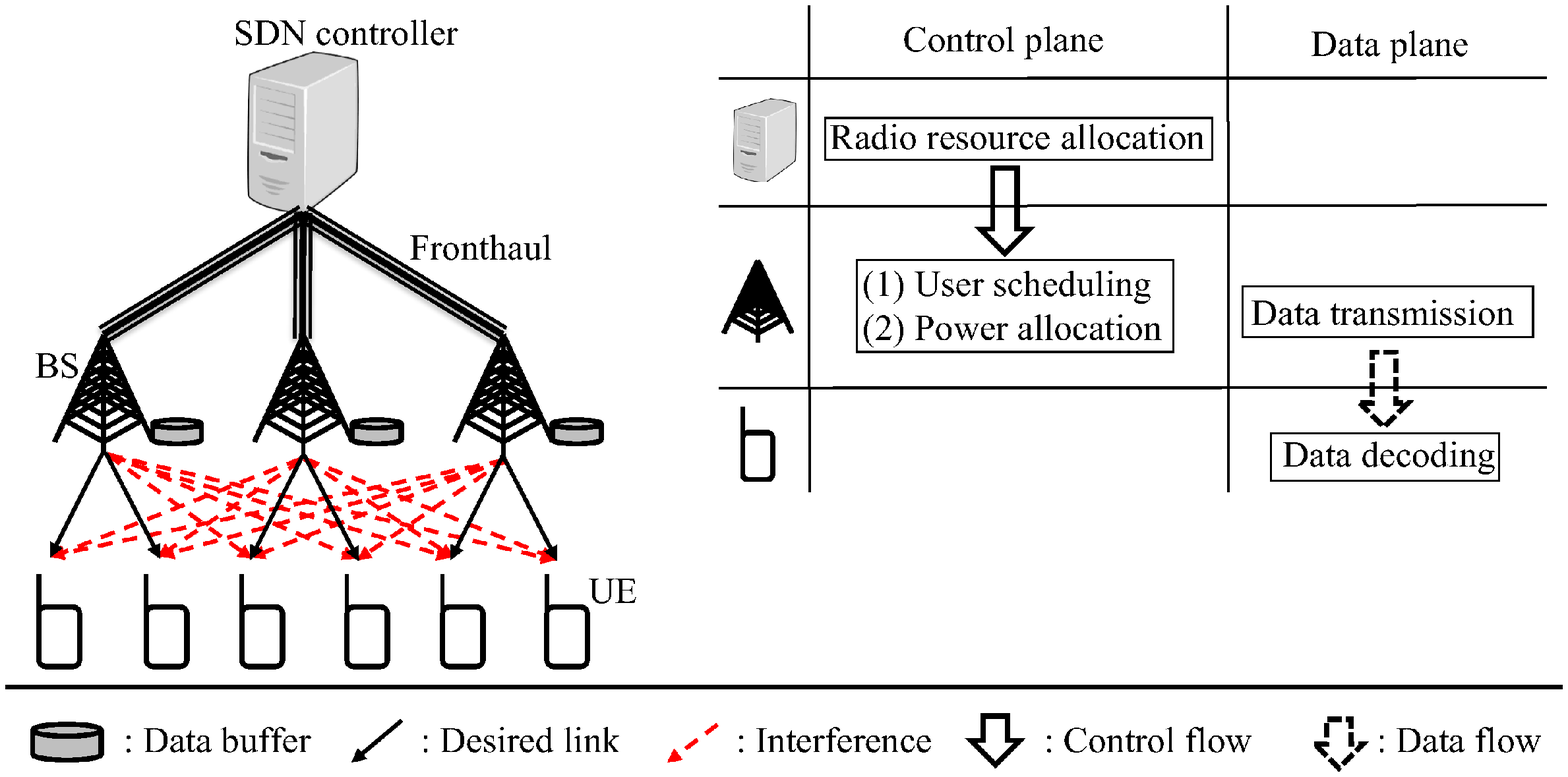}}
\subfigure[Timeline of the proposed two-timescale control mechanism.]
	{\includegraphics[width=\columnwidth]{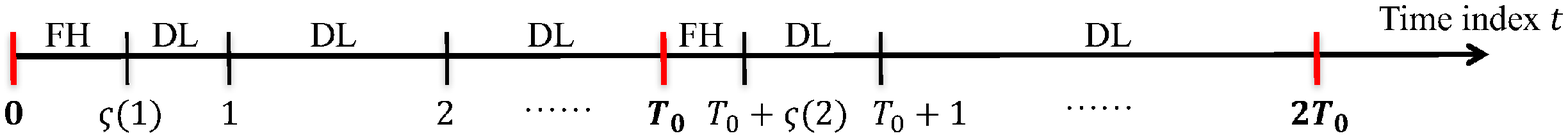}}
	\caption{System model.}
\label{Fig: System}
\end{figure}

As shown in Fig.~\ref{Fig: System}, we consider the DL of a software-defined RAN which consists of a set of locally-coupled small cell BSs $\mathcal{B}$ utilizing the same set of sub-carriers $\mathcal{S}.$ In the considered network, BS $b\in\mathcal{B}$ serves a set of UEs $\mathcal{M}_{b},$ and each UE is served by one BS only, i.e., $\mathcal{M}_{b}\cap\mathcal{M}_{b'}=\emptyset,\forall\,b\neq b'.$  Additionally, an SDN controller, connected to the BSs via an in-band wireless fronthaul, is deployed to coordinate BS transmissions.  We assume that all network entities, i.e., the controller, BSs, and UEs, are equipped with a single antenna. Unless stated otherwise,  we denote the channel gain which includes path loss and channel fading from transmitter $i$ to receiver $j$ over sub-carrier $s\in\mathcal{S}$ as $h_{ij}^{(s)}\in\mathcal{H}_{ij}^{(s)}$ in which $\mathcal{H}_{ij}^{(s)}$ is a finite set. Moreover, the network operates in slotted time indexed by $t\in\mathbb{Z}^{+},$ and $T_0$ successive slots are grouped into one time frame which is indexed by $n\in\mathbb{Z}^{+},$ namely, $\mathcal{T}(n)=[(n-1)T_0+1,\cdots,nT_0].$ For simplicity, each time slot is of unit time. All channels are independent and experience block fading over time slots.

At the beginning of each frame $n,$ BS $b$ uses $\varsigma_{b}^{ \rm U}(n)$ portion of a time slot to send its local information to the controller. To this end, BS $b$ equally allocates the available power budget $|\mathcal{S}|\totalpower b$ over all sub-carriers. We assume that the transmitted information in the fronthaul and DL transmission are encoded based on a Gaussian distribution. Moreover, the noise in the received signal is an additive white Gaussian noise with zero mean and variance $\sigma^2$. Given the required upload rate $R^{\rm U}$ in the fronthaul, we obtain $\varsigma_{b}^{\rm U}(n),\forall\,b\in\mathcal{B},$ from
\begin{equation}\label{Eq: FH time cost}
  \textstyle R^{\mathrm{U}}=  \sum\limits_{s\in\mathcal{S}} \frac{\varsigma_{b}^{\rm U}(n)}{T_0}
\log_2\bigg(1+\frac{\totalpower{b}h^{(s)}_{b{\rm C}}(n)}{\sigma^2+\sum\limits_{b'\in\mathcal{B}\setminus b}\totalpower{b'}h^{(s)}_{b'{\rm C}}(n)}\bigg),
\end{equation}
where the subscript ${\rm C}$ refers to the controller, and $\mathcal{B}\setminus b$ represents the set of BSs excluding BS $b.$ To gather global network information from all BSs, the total required time is represented by $\max_{b\in\mathcal{B}}\{\varsigma_{b}^{\rm U}(n)\}.$
After acquiring the information, the controller uses another time portion $\varsigma_{b}^{\rm F}(n)$ to feed its recommendations back to each BS $b.$ Analogously, 
given the required feedback rate $R^{\rm F},$ the time portion $\varsigma_{b}^{\rm F}(n),$ $\forall\,b\in\mathcal{B},$  can be obtained from 
\begin{align}
 R^{\rm F}&\textstyle=\sum\limits_{s\in\mathcal{S}} \frac{\varsigma_{b}^{\rm F}(n)}{T_0}
\log_2\bigg(1+\frac{\frac{\totalpower  {\rm C}}{|\mathcal{B}|} h^{(s)}_{{\rm C}b}(n)}{\sigma^2+\sum\limits_{b'\in\mathcal{B}\setminus b}\frac{\totalpower {\rm C}}{|\mathcal{B}|}h^{(s)}_{{\rm C}b}(n)}\bigg)\notag
\\&\textstyle=\sum\limits_{s\in\mathcal{S}}\frac{ \varsigma_{b}^{\rm F}(n)}{T_0}
\log_2\Big(1+\frac{\totalpower  {\rm C} h^{(s)}_{{\rm C}b}(n)}{|\mathcal{B}|\sigma^2+(|\mathcal{B}|-1)\totalpower {\rm C}h^{(s)}_{{\rm C}b}(n)}\Big)\label{Eq: FH download cost}
\end{align}
in which the controller's available power $|\mathcal{S}|\totalpower {\rm C}$ is  equally allocated to all BSs over all sub-carriers.\footnote{Extensive studies of other radio access techniques, e.g., multi-input multi-output (MIMO), orthogonal frequency-division multiple access (OFDMA), and non-orthogonal multiple access (NOMA),  in fronthaul-aware transmission are left for future works.} 
Having the controller's recommendations, each BS starts serving its UEs. We note that if one BS starts its DL transmission while other BSs still receive the controller's recommendations via the fronthaul, this incurs interference to the UEs. In order to avoid this, all DL transmissions are synchronized. Therefore, the round-trip time in the fronthaul is given by $\varsigma(n)=\max_{b\in\mathcal{B}}\{\varsigma_{b}^{\rm U}(n)\}+\max_{b\in\mathcal{B}}\{\varsigma_{b}^{\rm F}(n)\}.$ We assume that $\varsigma(n)$ is known at all BSs and belongs to a finite set $\mathcal{G},$ i.e., $\varsigma(n)\in\mathcal{G}.$

In the remaining time of frame $n,$ the BS focuses on DL transmission with the aid of the controller's recommendations. We further assume that each sub-carrier is used by the BS to serve at most one UE, but can be reused by other BSs. Thus, the UE receives inter-cell interference. Since there are $|\mathcal{S}|$ sub-carriers, the BS can orthogonally serve up to $|\mathcal{S}|$ UEs simultaneously.
In each time slot $t,$ BS $b$ allocates power $P_{bm}^{(s)}(t)\in\mathcal{L}_b$ over sub-carrier $s$ to serve UE $m\in\mathcal{M}_{b}.$ Here, $\mathcal{L}_b=\{0,\cdots,|\mathcal{S}|\totalpower b\}$ is a finite set  and $\sum_{m\in\mathcal{M}_b}\sum_{s\in\mathcal{S}} P_{bm}^{(s)}(t)\leq |\mathcal{S}|\totalpower b.$ Taking into account the available time in DL transmission $T_0-\varsigma(n),$ the effective DL rate of UE $m\in\mathcal{M}_b$ served by BS $b\in\mathcal{B}$ over sub-carrier $s\in\mathcal{S}$ and time period $t\in\mathcal{T}(n)$ is expressed as
\begin{equation}\label{Eq: DL rate}
\textstyle R_{bm}^{(s)}(t)=\frac{T_0-\varsigma(n)}{T_0}\log_2\bigg(1+\frac{P_{bm}^{(s)}(t)h_{bm}^{(s)}(t)}{\sigma^2+\sum\limits_{b'\in\mathcal{B}\setminus  b}\sum\limits_{ m'\in\mathcal{M}_{b'}}P_{b'm'}^{(s)}(t)  h_{b'm}^{(s)}(t)}\bigg).
\end{equation}
Note that since the channel gain and transmit power are upper bounded, the DL rate \eqref{Eq: DL rate} is bounded by a maximum value $R_{\rm max}.$

Moreover, BSs have queue buffers to store the data for the served UEs. Denoting the queue length for UE $m\in\mathcal{M}_b$ at the beginning of slot $t$ as $Q_{bm}(t),$ the evolution of queue dynamics is given by
\begin{align}
&\textstyle Q_{bm}(t+1)=\max\Big\{Q_{bm}(t)-\sum\limits_{s\in\mathcal{S}}R_{bm}^{(s)}(t),0\Big\}+\lambda_{bm}(t),\label{Eq: Queue-Q}
\end{align}
where $\lambda_{bm}(t),$ independent and identically distributed over time, is the data arrival during slot $t$ for UE $m\in\mathcal{M}_b.$ In addition, $\lambda_{bm}(t)$ is upper bounded by a finite value $\lambda_{\rm max},$ i.e., $0\leq \lambda_{bm}(t)\leq \lambda_{\rm max},$ and with the mean value $\bar{\lambda}_{bm}>0.$
In practice, it is not feasible to store data indefinitely as time evolves. To cope with this, we force the queue length to be {\it mean rate stable} which means that
\begin{equation*}
\lim\limits_{t\to\infty}\frac{\mathbb{E}\left[|Q_{bm}(t)|\right]}{t}\to 0,~\forall\,b\in\mathcal{B},m\in\mathcal{M}_b.
\end{equation*}
In the considered network, each BS $b$ aims to maximize its long-term time-averaged expected DL rate while stabilizing  UEs' data queues, i.e.,
\begin{IEEEeqnarray*}{rcl}
\mbox{{\bf BP:}}~~&\underset{P_{bm}^{(s)}(t)\in\mathcal{L}_b}{\mbox{maximize}}&\textstyle ~~\sum\limits_{m\in\mathcal{M}_b}\sum\limits_{s\in\mathcal{S}}\bar{R}_{bm}^{(s)}
\\&\mbox{subject to}&~~\lim\limits_{t\to\infty}\frac{\mathbb{E}\left[|Q_{bm}(t)|\right]}{t}\to 0,~\forall\,m\in\mathcal{M}_b,
\\&& \textstyle~~\sum\limits_{m\in\mathcal{M}_b}\mathbbm{1}\big\{ P_{bm}^{(s)}(t)>0\big\}\leq 1,~\forall\,t,s\in\mathcal{S},
\\&&\textstyle ~~\sum\limits_{m\in\mathcal{M}_b}\sum\limits_{s\in\mathcal{S}} P_{bm}^{(s)}(t)\leq |\mathcal{S}|\totalpower b,~\forall\,t,
\end{IEEEeqnarray*}
where $\bar{R}^{(s)}_{bm}\coloneqq\lim\limits_{T\to\infty}\frac{1}{T}\sum_{t=1}^{T}\mathbb{E}\big[R_{bm}^{(s)}(t)\big]$, and the probabilistic expectation $\mathbb{E}[\cdot] $ is calculated with respect to the stochastic channel or/and data arrival. Additionally, $\mathbbm{1}\{\cdot\}$ denotes the indicator function which imposes the constraint that each sub-carrier cannot be used to serve multiple UEs. 
%
%
%
%
%
We also note that \cite{Neely/Stochastic}
\begin{equation*}
\lim\limits_{t\to\infty}\frac{\mathbb{E}\left[|Q_{bm}(t)|\right]}{t}\to 0\Longrightarrow  \textstyle\sum\limits_{s\in\mathcal{S}}\bar{R}^{(s)}_{bm}\geq \bar{\lambda}_{bm}.
\end{equation*}
Therefore, as the BSs compete for sub-carriers to maximize their average DL rates, they need sufficient resources to stabilize the users' queues such that the average DL rate is larger than or equal to the total mean data arrival, i.e.,
\begin{equation}\label{Eq: BS QoS}
\textstyle  \sum\limits_{m\in\mathcal{M}_b} \sum\limits_{s\in\mathcal{S}}\bar{R}^{(s)}_{bm}\geq \bar{\lambda}_b=\sum\limits_{m\in\mathcal{M}_b}\bar{\lambda}_{bm},~\forall\,b\in\mathcal{B},
\end{equation}
where $ \bar{\lambda}_b$ is the sum of all UEs' mean data arrival in $\mathcal{M}_b.$
Since BSs compete for the limited resources to maximize their own utilities (in terms of the average DL rate) under queue dynamics and channel randomness, we model the DL transmission as a stochastic game among BSs, which will be elaborated upon in the next section.

\section{Stochastic Game among Base Stations}\label{Sec: Stochastic game}
The stochastic game among the set of players $\mathcal{B},$ i.e., all BSs, is denoted by $\mathrm{G}=\big(\mathcal{B},\mathcal{W},\mathcal{A},\{u_{b}\}_{b\in\mathcal{B}}\big).$
For the sake of clarity, we list the notation associated with the stochastic game in Table~\ref{Tab: notation}.
\begin{table}[t]
\caption{Notation of the stochastic game.}\label{Tab: notation}
	\vspace{-1em}
\centering
 \begin{tabular}{ll}
\hline
{\bf Definition} &{\bf  Notation } \\
\hline
\hline
BS $b$'s random state &$\boldsymbol{\omega}_b\coloneqq(\varsigma,h_{bm}^{(s)},m\in\mathcal{M}_b,s\in\mathcal{S})$\\
 \hline
BS $b$'s state space&$\mathcal{W}_b\coloneqq\{\boldsymbol{\omega}_b\}$\\
 \hline
Global random state&$\boldsymbol{\omega}\coloneqq[\boldsymbol{\omega}_1,\cdots,\boldsymbol{\omega}_{|\mathcal{B}|}]$\\
\hline
Global  state space&$\mathcal{W}\coloneqq\mathcal{W}_1\times\cdots\times\mathcal{W}_{|\mathcal{B}|}$\\
 \hline
BS $b$'s control action&$\boldsymbol{\alpha}_b\coloneqq\Big(  P^{(s)}_{bm},m\in\mathcal{M}_b,s\in\mathcal{S} \big|\sum\limits_{m\in\mathcal{M}_b}\sum\limits_{s\in\mathcal{S}} $
\\&$P_{bm}^{(s)}\leq |\mathcal{S}|\totalpower b,\sum\limits_{m\in\mathcal{M}_b}\mathbbm{1}\big\{ P_{bm}^{(s)}>0\big\}\leq 1,\Big)$\\
\hline
BS $b$'s action space&$\mathcal{A}_b\coloneqq\big\{\boldsymbol{\alpha}_{b}\big\}$\\
 \hline
Global control action&$\boldsymbol{\alpha}\coloneqq[\boldsymbol{\alpha}_1,\cdots,\boldsymbol{\alpha}_{|\mathcal{B}|}]$\\
\hline
Global action space&$\mathcal{A}\coloneqq\mathcal{A}_1\times\cdots\times\mathcal{A}_{|\mathcal{B}|}$\\
\hline
BS $b$'s utility&$u_{b}(\boldsymbol{\omega},\boldsymbol{\alpha})$\\
\hline
\end{tabular}
\vspace{-2em}
\end {table}
In each time slot $t,$ each BS $b$ first observes a random state $\boldsymbol{\omega}_b(t)\in\mathcal{W}_b,$ and then chooses an action $\boldsymbol{\alpha}_b(t)\in\mathcal{A}_b.$
Given the global random state $\boldsymbol{\omega}(t)\in\mathcal{W}$ and the global action $\boldsymbol{\alpha}(t)\in\mathcal{A},$ BS $b$'s utility is the expected DL rate with respect to the interference channels of $\mathcal{M}_b,$ i.e.,
\begin{align}\label{Eq: BS's utility}
 \textstyle u_{b}(\boldsymbol{\omega}(t),\boldsymbol{\alpha}(t))\coloneqq\sum\limits_{m\in\mathcal{M}_b}\sum\limits_{s\in\mathcal{S}}\mathbb{E}_{\boldsymbol{\iota}_b}\big[R_{bm}^{(s)}(t)\big],
\end{align}
where $\boldsymbol{\iota}_b \coloneqq (h_{b'm}^{(s)},b'\in\mathcal{B}\setminus b,m\in\mathcal{M}_b).$ Further, given the \emph{mixed Markovian} strategy $\Pr(\boldsymbol{\alpha}(t)|  \boldsymbol{\omega}(t))$ of the stochastic game in time slot $t$, i.e., the probability of choosing action $\boldsymbol{\alpha}(t)$ for a given state $\boldsymbol{\omega}(t),$ BS $b$'s expected utility (over the random state distribution and mixed Markovian strategy) in slot $t$ is calculated as 
\begin{multline}
\textstyle\mathbb{E}[ u_{b}(t)]=\sum\limits_{\boldsymbol{\omega}(t)\in\mathcal{W}}\sum\limits_{\boldsymbol{\alpha}(t)\in\mathcal{A}}\Pr(\boldsymbol{\omega}(t))
\\\times \Pr(\boldsymbol{\alpha}(t)|  \boldsymbol{\omega}(t))u_b(\boldsymbol{\omega}(t),\boldsymbol{\alpha}(t)),
 \end{multline}
 and the long-term time-averaged expected utility is expressed as
\begin{multline}
 \bar{u}_b= \textstyle\lim\limits_{T\to\infty}\frac{1}{T} \sum\limits_{t=1}^T
\mathbb{E}[ u_{b}(t)] 
 = \lim\limits_{T\to\infty}\frac{1}{T} \sum\limits_{t=1}^T\sum\limits_{\boldsymbol{\omega}(t)\in\mathcal{W}}
\\\textstyle\sum\limits_{\boldsymbol{\alpha}(t)\in\mathcal{A}} \Pr(\boldsymbol{\omega}(t))\Pr(\boldsymbol{\alpha}(t)|  \boldsymbol{\omega}(t))u_b(\boldsymbol{\omega}(t),\boldsymbol{\alpha}(t)).\label{Eq: BS's time-averaged utility}
\end{multline}
In this work, we focus on the \emph{mixed stationary} and \emph{Markovian} strategy $\Pr(\boldsymbol{\alpha}|\boldsymbol{\omega})$, i.e., $\Pr(\boldsymbol{\alpha}(t)|  \boldsymbol{\omega}(t))=\Pr(\boldsymbol{\alpha}|\boldsymbol{\omega}),\forall\,t.$ Since the random state distribution is also stationary (due to block fading), \eqref{Eq: BS's time-averaged utility} can be rewritten as
\begin{equation}\label{Eq: BS's stationary utility}
 \textstyle\bar{u}_{b}=\sum\limits_{\boldsymbol{\omega}\in\mathcal{W}}\sum\limits_{\boldsymbol{\alpha}\in\mathcal{A}} \Pr(\boldsymbol{\omega})\Pr(\boldsymbol{\alpha}|  \boldsymbol{\omega})u_b(\boldsymbol{\omega},\boldsymbol{\alpha}).
\end{equation}

In this  stochastic game, the controller acts as a game coordinator which issues transmission recommendations, e.g., a mixed strategy $\Pr(\boldsymbol{\alpha}|\boldsymbol{\omega}),$ to the BSs based on the obtained information and the BSs' requirements \eqref{Eq: BS QoS}. To incentivize the BSs to follow the controller's recommendations as introduced in Section \ref{Sec: Introduction}, the controller enforces the CCE strategy. In this regard, we consider the general formulation of $\epsilon$-coarse correlated equilibrium ($\epsilon$-CCE) \cite{Neely:13:Arxiv,epsilonCCE}, formally defined as follows.
\begin{definition}
The  strategy $\Pr(\boldsymbol{\alpha}|\boldsymbol{\omega})$ is an $\epsilon$-CCE of the stochastic game $\mathcal{G}$ if it satisfies
\begin{align}
\Pr(\boldsymbol{\omega}_b)\theta_b( \boldsymbol{\omega}_b)&\textstyle\geq  \sum\limits_{\boldsymbol{\omega}\in\mathcal{W}|\boldsymbol{\omega}_b}\sum\limits_{\boldsymbol{\alpha}\in\mathcal{A}} \Pr(\boldsymbol{\omega})\Pr(\boldsymbol{\alpha}|  \boldsymbol{\omega})u_b(\boldsymbol{\omega}, \boldsymbol{\chi}_b, \boldsymbol{\alpha}_{-b}), \notag
\\&\textstyle\hspace{5.1em}\forall\,b\in\mathcal{B},\boldsymbol{\omega}_b\in\mathcal{W}_b, \boldsymbol{\chi}_b\in\mathcal{A}_b,
\label{Eq: epsilon-CCE-1}
\\  \bar{u}_b&\textstyle\geq 
\sum\limits_{ \boldsymbol{\omega}_b\in\mathcal{W}_b} \Pr( \boldsymbol{\omega}_b)\theta_b( \boldsymbol{\omega}_b)-\epsilon,~\forall\,b\in\mathcal{B},
\label{Eq: epsilon-CCE-2}
\end{align}
with $\epsilon\geq 0.$ 
\end{definition}
In \eqref{Eq: epsilon-CCE-1}, $\boldsymbol{\alpha}_{-b}$ denotes the global action vector excluding BS $b$'s action, and $\boldsymbol{\chi}_b\coloneqq\big(  \chi^{(s)}_{bm},m\in\mathcal{M}_b,s\in\mathcal{S}  \big)$ is one specific action of BS $b$ regardless of the strategy $\Pr(\boldsymbol{\alpha}|  \boldsymbol{\omega})$.
Hence, given the observed state $\boldsymbol{\omega}_b$ of BS $b,$ $\theta_b( \boldsymbol{\omega}_b)\in[0,u_{b}^{\rm max}]$ is the maximum utility if BS $b$  deviates from the $\epsilon$-CCE strategy $\Pr(\boldsymbol{\alpha}|  \boldsymbol{\omega})$ whereas the other BSs follow it. $u_{b}^{\rm max}$ is the maximum utility BS $b$ can achieve with any strategy.
\begin{remark}
Given that the other BSs follow the controller's $\epsilon$-CCE recommendations, BS $b$'s utility gain is upper bounded by $\epsilon$ if it deviates. Moreover, by letting $\epsilon=0,$ we have the expression for the conventional CCE \cite{jnl:perlaza13}.
\end{remark}

\section{Network Utility Maximization Formulation}\label{Sec: Problem formulation}

Taking into account the BSs' QoS requirements \eqref{Eq: BS QoS} and $\epsilon$-CCE constraints  \eqref{Eq: epsilon-CCE-1} and \eqref{Eq: epsilon-CCE-2}, the controller aims at solving the following  problem:
\begin{IEEEeqnarray*}{rcl}
\mbox{{\bf MP:}}~~&\underset{\Pr(\boldsymbol{\alpha}|\boldsymbol{\omega}),\theta_b( \boldsymbol{\omega}_b)}{\text{maximize}} &~~\phi(\bar{u}_{1},\cdots,\bar{u}_{|\mathcal{B}|})
\\&\mbox{subject to}&~~\mbox{QoS requirement \eqref{Eq: BS QoS}},
\\&&~~\mbox{$\epsilon$-CCE constraints \eqref{Eq: epsilon-CCE-1} and \eqref{Eq: epsilon-CCE-2}},
\\&&\textstyle~~ \sum\limits_{\boldsymbol{\alpha}\in\mathcal{A}} \Pr(\boldsymbol{\alpha}|  \boldsymbol{\omega})=1,~\forall\,\boldsymbol{\omega}\in\mathcal{W},
\\&&~~\Pr(\boldsymbol{\alpha}|  \boldsymbol{\omega})\geq 0,~\forall\,\boldsymbol{\omega}\in\mathcal{W}, \boldsymbol{\alpha}\in\mathcal{A},
\\&&~~\theta_b( \boldsymbol{\omega}_b)\in[0,u_{b}^{\rm max}],~\forall\,b\in\mathcal{B},\boldsymbol{\omega}_b\in\mathcal{W}_b,
\end{IEEEeqnarray*}
with the network utility function
\begin{align}
&\textstyle \phi(\bar{u}_{1},\cdots,\bar{u}_{|\mathcal{B}|})\coloneqq\sum\limits_{b\in\mathcal{B}}\bar{\lambda}_b\ln\big(1+\bar{u}_b\big)\label{Eq: Network utility}
 \\&\textstyle=\sum\limits_{b\in\mathcal{B}}\bar{\lambda}_b\ln\Big(1+\sum\limits_{\boldsymbol{\omega}\in\mathcal{W}}\sum\limits_{\boldsymbol{\alpha}\in\mathcal{A}} \Pr(\boldsymbol{\omega})\Pr(\boldsymbol{\alpha}|  \boldsymbol{\omega})u_b(\boldsymbol{\omega},\boldsymbol{\alpha})\Big).\label{Eq: Network utility_expansion}
\end{align}
The network utility  has a finite value and ensures that the BSs' average DL rates are proportional to their data arrival rates. To solve {\bf MP}, the controller requires the statistics of the interference channels $\boldsymbol{\iota}_b$ in $u_{b}(\boldsymbol{\omega},\boldsymbol{\alpha})$ as per \eqref{Eq: DL rate}, \eqref{Eq: BS's utility}, and \eqref{Eq: BS's stationary utility}. Since the UE  measures only the aggregate interference from the received signals, having the distribution of each interference channel, even approximately, is highly complex. To address this issue, we introduce an auxiliary utility for each $b\in\mathcal{B},$ i.e.,
\begin{align}
&\textstyle v_{b}(\boldsymbol{\omega},\boldsymbol{\alpha})\coloneqq  \sum\limits_{m\in\mathcal{M}_b}\sum\limits_{s\in\mathcal{S}}\frac{T_0-\varsigma}{T_0}\notag
\\&\textstyle\qquad\times \log_2\bigg(1+\frac{P_{bm}^{(s)}h_{bm}^{(s)}}{\sigma^2+\sum\limits_{b'\in\mathcal{B}\setminus b}\sum\limits_{ m'\in\mathcal{M}_{b'}}P_{b'm'}^{(s)}  [\mathcal{H}_{b'm}^{(s)}]_{\rm max}} \bigg),\label{Eq: Auxiliary utility}
\end{align}
where $ [\mathcal{H}_{b'm}^{(s)}]_{\rm max}$ is the maximum element in  $\mathcal{H}_{b'm}^{(s)}.$
\begin{proposition}\label{Prop: Mean rate stability}
The QoS requirement \eqref{Eq: BS QoS} is satisfied  if
\begin{equation}\label{Eq: Auxiliary mean stable}
 \textstyle\bar{v}_b= \sum\limits_{\boldsymbol{\omega}\in\mathcal{W}}\sum\limits_{\boldsymbol{\alpha}\in\mathcal{A}}  \Pr(\boldsymbol{\omega})\Pr(\boldsymbol{\alpha}|  \boldsymbol{\omega})v_{b}(\boldsymbol{\omega},\boldsymbol{\alpha})\geq\bar{\lambda}_{b}.
\end{equation}
\end{proposition}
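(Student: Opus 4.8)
The plan is to show that the auxiliary utility $v_b$ is a pointwise lower bound on BS $b$'s true expected rate $u_b$, so that the hypothesis $\bar{v}_b\geq\bar{\lambda}_b$ in \eqref{Eq: Auxiliary mean stable} forces $\sum_{m\in\mathcal{M}_b}\sum_{s\in\mathcal{S}}\bar{R}_{bm}^{(s)}\geq\bar{\lambda}_b$, which is exactly the QoS requirement \eqref{Eq: BS QoS}. The whole argument is a monotonicity observation followed by routine averaging; the only part requiring care is the bookkeeping that identifies $\bar{u}_b$ with the sum of average DL rates.

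First I would establish the pointwise inequality $v_b(\boldsymbol{\omega},\boldsymbol{\alpha})\leq u_b(\boldsymbol{\omega},\boldsymbol{\alpha})$. The sole difference between the auxiliary summand in \eqref{Eq: Auxiliary utility} and the instantaneous rate \eqref{Eq: DL rate} is that each interference gain $h_{b'm}^{(s)}$ is replaced by the largest element $[\mathcal{H}_{b'm}^{(s)}]_{\rm max}$ of its finite alphabet. Since $[\mathcal{H}_{b'm}^{(s)}]_{\rm max}\geq h_{b'm}^{(s)}$ for every realization, the interference term in the denominator of $v_b$ is no smaller than that of $R_{bm}^{(s)}$; because $x\mapsto\log_2(1+c/x)$ is non-increasing in $x>0$ for any $c\geq 0$ and the prefactor $(T_0-\varsigma)/T_0$ is non-negative, each summand of $v_b$ is bounded above by the corresponding $R_{bm}^{(s)}$ for \emph{every} interference realization $\boldsymbol{\iota}_b$. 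Summing over $m\in\mathcal{M}_b$ and $s\in\mathcal{S}$ gives $v_b(\boldsymbol{\omega},\boldsymbol{\alpha})\leq\sum_{m\in\mathcal{M}_b}\sum_{s\in\mathcal{S}}R_{bm}^{(s)}$ pointwise in $\boldsymbol{\iota}_b$.

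Next, since $v_b$ uses the deterministic worst-case gains it is independent of $\boldsymbol{\iota}_b$, so applying $\mathbb{E}_{\boldsymbol{\iota}_b}[\cdot]$ to both sides and invoking the definition \eqref{Eq: BS's utility} yields $v_b(\boldsymbol{\omega},\boldsymbol{\alpha})\leq u_b(\boldsymbol{\omega},\boldsymbol{\alpha})$. Averaging over the stationary state distribution $\Pr(\boldsymbol{\omega})$ and mixed strategy $\Pr(\boldsymbol{\alpha}|\boldsymbol{\omega})$, and comparing \eqref{Eq: BS's stationary utility} with \eqref{Eq: Auxiliary mean stable}, then gives $\bar{v}_b\leq\bar{u}_b$.

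Finally I would identify $\bar{u}_b$ with the sum of average rates. Expanding \eqref{Eq: BS's stationary utility} via \eqref{Eq: BS's utility} and exchanging the finite sums over $m,s$ with the state/action averaging shows that each term equals the stationary expectation of $R_{bm}^{(s)}$, which by stationarity coincides with the time average $\bar{R}_{bm}^{(s)}$; hence $\bar{u}_b=\sum_{m\in\mathcal{M}_b}\sum_{s\in\mathcal{S}}\bar{R}_{bm}^{(s)}$. Chaining the inequalities under the hypothesis \eqref{Eq: Auxiliary mean stable} gives $\sum_{m\in\mathcal{M}_b}\sum_{s\in\mathcal{S}}\bar{R}_{bm}^{(s)}=\bar{u}_b\geq\bar{v}_b\geq\bar{\lambda}_b$, which is precisely \eqref{Eq: BS QoS}. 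The main (and only) obstacle is this last identification $\bar{u}_b=\sum_{m,s}\bar{R}_{bm}^{(s)}$; once the pointwise bound $v_b\leq u_b$ is secured, the remainder consists of monotone averages and presents no genuine difficulty.
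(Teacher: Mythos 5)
Your proposal is correct and follows essentially the same route as the paper's proof: the paper likewise establishes the pointwise bound $u_b(\boldsymbol{\omega},\boldsymbol{\alpha})\geq v_b(\boldsymbol{\omega},\boldsymbol{\alpha})$ from the worst-case interference gains and then takes expectations with respect to $\Pr(\boldsymbol{\omega})\Pr(\boldsymbol{\alpha}|\boldsymbol{\omega})$ to chain $\sum_{m\in\mathcal{M}_b}\sum_{s\in\mathcal{S}}\bar{R}_{bm}^{(s)}=\bar{u}_b\geq\bar{v}_b\geq\bar{\lambda}_b$. Your version merely spells out the monotonicity of $\log_2(1+c/x)$ and the stationarity argument identifying $\bar{u}_b$ with the sum of time-averaged rates, steps the paper leaves implicit.
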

\begin{proof}
Please refer to Appendix \ref{Lem: Mean rate stability}.
\end{proof}
\begin{proposition}\label{Prop: epsilon-CCE}
The CCE strategy with respect to  the auxiliary utility $v_b(\boldsymbol{\omega},\boldsymbol{\alpha})$ achieves an $\epsilon$-CCE with respect to the  utility $u_b(\boldsymbol{\omega},\boldsymbol{\alpha}).$
\end{proposition}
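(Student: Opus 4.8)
The plan is to show that the auxiliary utility $v_b$ is a pointwise lower bound on the true utility $u_b$, and that the two differ by a bounded amount, so that any exact CCE for $v_b$ translates into an $\epsilon$-CCE for $u_b$ with an explicit $\epsilon$. The key observation is that in \eqref{Eq: Auxiliary utility} each interference term $P_{b'm'}^{(s)}[\mathcal{H}_{b'm}^{(s)}]_{\rm max}$ uses the worst-case (largest) channel gain, whereas in the true rate \eqref{Eq: DL rate} the actual gain $h_{b'm}^{(s)}$ appears. First I would establish, by comparing denominators term by term inside the logarithm, that $v_b(\boldsymbol{\omega},\boldsymbol{\alpha})\leq u_b(\boldsymbol{\omega},\boldsymbol{\alpha})$ for every state-action pair; since $h_{b'm}^{(s)}\leq[\mathcal{H}_{b'm}^{(s)}]_{\rm max}$, the interference-plus-noise denominator in $v_b$ is at least as large, so the argument of each logarithm is no larger, giving the inequality. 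Note also that $v_b$ replaces the expectation $\mathbb{E}_{\boldsymbol{\iota}_b}[\cdot]$ over interference channels by a deterministic worst-case value, which is exactly what removes the need for the controller to know the interference-channel statistics.

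Next I would bound the gap $u_b-v_b$ from above by a constant. The difference of two logarithms is controlled because both arguments are bounded: the numerators are identical, and the denominators differ only through the interference channel gains, which all lie in the finite sets $\mathcal{H}_{b'm}^{(s)}$. Taking expectations and then the long-term time average preserves the pointwise inequality, yielding $\bar v_b\leq\bar u_b$ and $\bar u_b-\bar v_b\leq\delta_b$ for some finite $\delta_b$ determined by the channel-gain sets, the power budgets, and $R_{\rm max}$. Let $\epsilon\coloneqq\max_{b\in\mathcal{B}}\delta_b$, or more simply bound each gap by a uniform constant.

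With these two facts in hand, the translation is direct. Suppose $\Pr(\boldsymbol{\alpha}|\boldsymbol{\omega})$ is an exact CCE with respect to $v_b$, i.e.\ it satisfies \eqref{Eq: epsilon-CCE-1}--\eqref{Eq: epsilon-CCE-2} with $v_b$ in place of $u_b$ and $\epsilon=0$. The deviation inequality \eqref{Eq: epsilon-CCE-1} is phrased in terms of the deviating action $\boldsymbol{\chi}_b$, and I would define $\theta_b(\boldsymbol{\omega}_b)$ as the best-deviation value under $v_b$; because $v_b\leq u_b$, the right-hand side of \eqref{Eq: epsilon-CCE-1} cannot increase when $v_b$ is used, and because $\bar u_b\geq\bar v_b$ the left-hand side of the equilibrium condition \eqref{Eq: epsilon-CCE-2} only improves. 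Combining $\bar u_b\geq\bar v_b\geq\sum_{\boldsymbol{\omega}_b}\Pr(\boldsymbol{\omega}_b)\theta_b(\boldsymbol{\omega}_b)$ (the CCE condition for $v_b$) with the gap bound shows that the same strategy satisfies \eqref{Eq: epsilon-CCE-2} for $u_b$ with slack at most $\epsilon$, which is exactly the claim.

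The main obstacle I anticipate is handling the deviation term \eqref{Eq: epsilon-CCE-1} cleanly, since the comparison between $u_b$ and $v_b$ must be carried through both under the equilibrium strategy and under an arbitrary unilateral deviation $\boldsymbol{\chi}_b$, and one must verify that the \emph{same} $\theta_b$ can be used for both utilities (or account for how it changes). Care is also needed to confirm that the worst-case substitution in $v_b$ indeed yields a uniform, state-independent bound on the gap rather than one that blows up as interference grows; this is where the boundedness of $\mathcal{H}_{b'm}^{(s)}$, $\mathcal{L}_b$, and hence $R_{\rm max}$ is essential.
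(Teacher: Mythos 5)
Your overall architecture matches the paper's proof in Appendix~\ref{Lem: epsilon-CCE}: both start from the pointwise domination $v_b(\boldsymbol{\omega},\boldsymbol{\alpha})\leq u_b(\boldsymbol{\omega},\boldsymbol{\alpha})$ (the paper's \eqref{Eq: Utility bound}, obtained exactly as you argue, by enlarging each interference denominator to $[\mathcal{H}_{b'm}^{(s)}]_{\rm max}$, with the expectation $\mathbb{E}_{\boldsymbol{\iota}_b}$ in $u_b$ handled realization by realization), and both then convert an exact CCE for $v_b$ into an $\epsilon$-CCE for $u_b$. However, there is a genuine gap at the step you yourself flag as the main obstacle, and the one sentence you offer about it points the wrong way. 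You propose to keep $\theta_b(\boldsymbol{\omega}_b)$ defined as the best-deviation value under $v_b$ and observe that ``the right-hand side of \eqref{Eq: epsilon-CCE-1} cannot increase when $v_b$ is used.'' That observation is true but certifies nothing: the $\epsilon$-CCE definition requires $\Pr(\boldsymbol{\omega}_b)\theta_b(\boldsymbol{\omega}_b)$ to dominate the deviation payoffs measured in $u_b$, and since $u_b\geq v_b$ those payoffs are \emph{larger} than the $v_b$-payoffs your $\theta_b$ dominates. So the $v$-based $\theta_b$ in general violates \eqref{Eq: epsilon-CCE-1} for $u_b$, and your chain $\bar u_b\geq\bar v_b\geq\sum_{\boldsymbol{\omega}_b}\Pr(\boldsymbol{\omega}_b)\theta_b(\boldsymbol{\omega}_b)$ verifies \eqref{Eq: epsilon-CCE-2} against the wrong certificate.

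The paper's resolution, which is the actual crux of the proof, is to introduce a second quantity $\mu_b(\boldsymbol{\omega}_b)$ defined as the best-deviation value under $u_b$ (see \eqref{Eq: Auxiliary epsilon-CCE-max}); this satisfies \eqref{Eq: epsilon-CCE-1} for $u_b$ by construction, and the slack is then pushed entirely into \eqref{Eq: epsilon-CCE-2} by writing $u_b=v_b+\delta_b$ with $\delta_b\geq 0$, showing $\mu_b(\boldsymbol{\omega}_b)=\theta_b(\boldsymbol{\omega}_b)+\epsilon_b(\boldsymbol{\omega}_b)$ with $\epsilon_b(\boldsymbol{\omega}_b)\geq 0$, and setting $\epsilon=\max_{b\in\mathcal{B}}\{\sum_{\boldsymbol{\omega}_b\in\mathcal{W}_b}\Pr(\boldsymbol{\omega}_b)\epsilon_b(\boldsymbol{\omega}_b)\}$. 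Your uniform gap bound $u_b-v_b\leq\delta$ would in fact close your own hole --- it gives $\mu_b(\boldsymbol{\omega}_b)\leq\theta_b(\boldsymbol{\omega}_b)+\delta$ since the max of a sum of deviation payoffs shifted by at most $\delta$ is shifted by at most $\delta$, whence $\bar u_b\geq\bar v_b\geq\sum_{\boldsymbol{\omega}_b}\Pr(\boldsymbol{\omega}_b)\mu_b(\boldsymbol{\omega}_b)-\delta$ --- but you never execute this move, and without it the proof is incomplete. Note also that the uniform constant bound is more than the paper needs: since all utilities are bounded, the paper simply takes $\epsilon$ to be the exact expected gap between $\mu_b$ and $\theta_b$, whereas your cruder $\epsilon=\max_b\delta_b$ would be valid but looser.
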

\begin{proof}
Please refer to Appendix \ref{Lem: epsilon-CCE}.
\end{proof}
Considering the results of Propositions \ref{Prop: Mean rate stability} and \ref{Prop: epsilon-CCE},  we reformulate the network problem {\bf MP} as
\begin{subequations}\label{Eq: Auxiliary network problem}
\begin{IEEEeqnarray}{cl}
\hspace{-2.5em}\underset{\Pr(\boldsymbol{\alpha}|\boldsymbol{\omega}),\theta_b( \boldsymbol{\omega}_b)}{\text{maximize}} &~\phi(\bar{v}_{1},\cdots,\bar{v}_{|\mathcal{B}|})\label{Eq: Auxiliary network problem-1}
\\\hspace{-2.5em}\mbox{subject to}&~\mbox{QoS requirement \eqref{Eq: Auxiliary mean stable}},\notag
\\&~\mbox{CCE constraints \eqref{Eq: Auxiliary CCE-1} and \eqref{Eq: Auxiliary CCE-2}},\notag
\\&~\textstyle \sum\limits_{\boldsymbol{\alpha}\in\mathcal{A}} {\Pr}(\boldsymbol{\alpha}|  \boldsymbol{\omega})=1,~\forall\,\boldsymbol{\omega}\in\mathcal{W},\label{Eq: Auxiliary network problem-4}
\\&~{\Pr}(\boldsymbol{\alpha}|  \boldsymbol{\omega})\geq 0,~\forall\,\boldsymbol{\omega}\in\mathcal{W}, \boldsymbol{\alpha}\in\mathcal{A},\label{Eq: Auxiliary network problem-5}
\\&~\theta_b(\boldsymbol{\omega}_b)\in[0,v_{b}^{\rm max}],~\forall\,b\in\mathcal{B},\boldsymbol{\omega}_b\in\mathcal{W}_b.
\end{IEEEeqnarray}
\end{subequations}
In this reformulated problem, we consider $\bar{v}_b$ in the network utility function, and $v_{b}^{\rm max}$ is the maximum achievable utility given any strategy. Note that the feasible set of the problem \eqref{Eq: Auxiliary network problem} is a subset of the feasible set of {\bf MP}, and $\bar{v}_b$ is upper bounded by $\bar{u}_b.$ Therefore, the optimal solution to problem \eqref{Eq: Auxiliary network problem} is a lower bound on the optimal solution to {\bf MP}.

\section{Software-defined Resource Allocation and User Scheduling}\label{Sec: controller's task}

In the proposed fronthaul-aware software-defined control mechanism, the controller issues recommendations based on the optimal CCE strategy $\Pr^{*}(\boldsymbol{\alpha}|\boldsymbol{\omega})$ of \eqref{Eq: Auxiliary network problem}. Nevertheless,  directly solving \eqref{Eq: Auxiliary network problem} suffers from the curse of computational complexity and incurs higher fronthaul cost. To alleviate these issues,  we resort to Lyapunov optimization which will be explained next.

\subsection{Statistics-based Resource Allocation  at the Controller}\label{Sec: Empirical information}

In order to solve problem \eqref{Eq: Auxiliary network problem}, the controller requires the knowledge of the global state statistics and mean data arrivals. Although the exact statistical information is unknown beforehand, the controller can acquire the empirically estimated information from the BSs via the fronthaul. Let us denote the estimated global sate statistics and the mean data arrival at BS $b$ as 
\begin{align}
\textstyle\hat{\Pr}(\boldsymbol{\omega}) =\prod\limits_{b\in\mathcal{B}}\prod\limits_{m\in\mathcal{M}_b}\prod\limits_{s\in\mathcal{S}}\hat{\Pr}(\varsigma)\hat{\Pr}(h_{bm}^{(s)})\label{Eq: estimated channel dist}
\end{align}
and $\hat{\lambda}_b,$ respectively. Leveraging the estimated information, \eqref{Eq: Auxiliary network problem} can be rewritten as
\begin{subequations}\label{Eq: Estimated auxiliary network problem}
\begin{IEEEeqnarray}{cl}
\hspace{-0.5em}\underset{\Pr(\boldsymbol{\alpha}|\boldsymbol{\omega}),\theta_b(\boldsymbol{\omega}_b)}{\mbox{maximize}} &\textstyle \sum\limits_{b\in\mathcal{B}}\hat{\lambda}_b\ln(1+\hat{v}_b)\label{Eq: Learning-objecitve}
\\\hspace{-0.5em}\mbox{subject to}&\textstyle\hat{v}_b\geq\hat{\lambda}_b,~\forall\,b\in\mathcal{B},\label{Eq: Learning-QoS}
\\\hspace{-0.5em}&\textstyle \hat{\Pr}(\boldsymbol{\omega}_b) \theta_b( \boldsymbol{\omega}_b)
\geq \sum\limits_{\boldsymbol{\omega}\in\mathcal{W}|\boldsymbol{\omega}_b\atop\boldsymbol{\alpha}\in\mathcal{A}} v_{b}(\boldsymbol{\omega},\boldsymbol{\chi}_{b},\boldsymbol{\alpha}_{-b}) \hat{\Pr}(\boldsymbol{\omega}) \notag
\\\hspace{-0.5em}&\times{\Pr}(\boldsymbol{\alpha}|  \boldsymbol{\omega}),~\forall\,b\in\mathcal{B},\boldsymbol{\omega}_b\in\mathcal{W}_b, \boldsymbol{\chi}_b\in\mathcal{A}_b, \label{Eq: Learning-CCE-1}
\\\hspace{-0.5em}&\textstyle \hat{v}_b\geq\sum\limits_{\boldsymbol{\omega}_b\in\mathcal{W}_b} \hat{\Pr}(\boldsymbol{\omega}_b)\theta_b( \boldsymbol{\omega}_b),~\forall\,b\in\mathcal{B},\label{Eq: Learning-CCE-2}
\\\hspace{-0.5em}&\textstyle \sum\limits_{\boldsymbol{\alpha}\in\mathcal{A}} \Pr(\boldsymbol{\alpha}|  \boldsymbol{\omega})=1,~\forall\,\boldsymbol{\omega}\in\mathcal{W},\label{Eq: Learning-dist-1}
\\\hspace{-0.5em}&\textstyle\Pr(\boldsymbol{\alpha}|  \boldsymbol{\omega})\geq 0,~\forall\,\boldsymbol{\omega}\in\mathcal{W}, \boldsymbol{\alpha}\in\mathcal{A},\label{Eq: Learning-dist-2}
\\\hspace{-0.5em}&{\theta}_b(\boldsymbol{\omega}_b)\in[0,{v}_{b}^{\rm max}],~\forall\,b\in\mathcal{B},\boldsymbol{\omega}_b\in\mathcal{W}_b.\label{Eq: Learning-dist-3}
\end{IEEEeqnarray}
\end{subequations}
with $\hat{v}_b=\sum_{\boldsymbol{\omega}\in\mathcal{W}}\sum_{\boldsymbol{\alpha}\in\mathcal{A}}  \hat{\Pr}(\boldsymbol{\omega}){\Pr}(\boldsymbol{\alpha}|  \boldsymbol{\omega})v_{b}(\boldsymbol{\omega},\boldsymbol{\alpha})$.
Since the objective and all constraints are concave and affine functions, respectively, problem \eqref{Eq: Estimated auxiliary network problem} is a convex optimization problem. However,  finding a closed-form solution is not tractable even after applying the Karush-Kuhn-Tucker (KKT) conditions.
To tackle this issue, we resort to CVX, a computer-based package for convex optimization \cite{cvx}, to numerically calculate the optimal solution ${\Pr}^{*}(\boldsymbol{\alpha}|\boldsymbol{\omega}).$

Note that the fronthaul cost depends on the amount of feedback and uploaded information.
If the controller directly feeds back the optimal strategy ${\Pr}^{*}(\boldsymbol{\alpha}|\boldsymbol{\omega})$ to BSs, there are $|\mathcal{A}|\times|\mathcal{W}|$ probabilistic values to be sent which increase exponentially with the number of BSs, UEs, and sub-carriers. This imposes a heavy burden on the fronthaul when the number of network entities grows. To alleviate this overhead while feeding back transmission recommendations, we consider the following scenario. According to the optimal CCE strategy ${\Pr}^{*}(\boldsymbol{\alpha}|\boldsymbol{\omega}),$ the controller generates a set of  $T_0$ global action realizations, i.e.,  $\{\boldsymbol{\alpha}_{\rm I}((n-1)T_0+1),\cdots,\boldsymbol{\alpha}_{\rm I}(nT_0)\},$ for each global state $\boldsymbol{\omega}\in\mathcal{W}.$
Then from all global states and the corresponding generated action realizations $\boldsymbol{\alpha}_{\rm I}(t)\in\mathcal{A}$ for time slot $t,$ the controller finds the mapping rule $\Psi^{*}_{\rm I}(t):\mathcal{W}\to\mathcal{A}$ which describes the above relation between  states and action realizations, i.e., $\Psi^{*}_{\rm I}(\boldsymbol{\omega};t)=\boldsymbol{\alpha}_{\rm I}(t).$ We assume that all possible mapping rules are known at the controller and BSs beforehand, and each possible $\Psi_{\rm I}$ is represented by a real value. Finally, the controller feeds back $T_0$ real values for $\{\Psi^{*}_{\rm I}(t),t\in\mathcal{T}(n)\}$ to the BSs as control recommendations. 
In the upload phase, the amount of uploaded information is decided by the mean data arrival as well as the cardinalities of the round trip time in fronthaul and local state space. Thus, there are $1+|\mathcal{G}|+\sum_{m\in\mathcal{M}_b,s\in\mathcal{S}}|\mathcal{H}_{bm}^{(s)}|$ uploaded statistical values for BS $b.$
Given that $R_{\rm unit}$ is the required transmission rate to upload a single statistical value, we can substitute  $(1+|\mathcal{G}|+\sum_{m\in\mathcal{M}_b,s\in\mathcal{S}}|\mathcal{H}_{bm}^{(s)}|)R_{\rm unit}$ for $R^{\rm U}$ in \eqref{Eq: FH time cost}. Similarly, we have $T_0R_{\rm unit}$ for $R^{\rm F}$  in \eqref{Eq: FH download cost}.
Regarding the computational complexity of the optimization problem \eqref{Eq: Estimated auxiliary network problem}, there are 
\begin{multline}\notag
\textstyle|\mathcal{W}| |\mathcal{A}|+\sum\limits_{b\in\mathcal{B}}|\mathcal{W}_b|
\approx
|\mathcal{G}| \Big(\prod\limits_{b\in\mathcal{B}}\prod\limits_{m\in\mathcal{M}_b}\prod\limits_{s\in\mathcal{S}}|\mathcal{H}_{bm}^{(s)}||\mathcal{L}_b|\Big)
\\\textstyle+\sum\limits_{b\in\mathcal{B}}|\mathcal{G}|\Big(\prod\limits_{m\in\mathcal{M}_b}\prod\limits_{s\in\mathcal{S}}|\mathcal{H}_{bm}^{(s)}|\Big)
\end{multline}
 optimization variables and
\begin{align*}
&\textstyle 2|\mathcal{B}|+\sum\limits_{b\in\mathcal{B}}|\mathcal{W}_b| |\mathcal{A}_b|+|\mathcal{W}|+|\mathcal{W}| |\mathcal{A}|+2\sum\limits_{b\in\mathcal{B}}|\mathcal{W}_b| 
\\&\textstyle\approx 2|\mathcal{B}|+\sum\limits_{b\in\mathcal{B}}|\mathcal{G}| \Big(\prod\limits_{m\in\mathcal{M}_b}\prod\limits_{s\in\mathcal{S}}|\mathcal{H}_{bm}^{(s)}||\mathcal{L}_b|\Big)+|\mathcal{G}| \Big(\prod\limits_{b\in\mathcal{B}}\prod\limits_{m\in\mathcal{M}_b}
\\&\quad\textstyle\prod\limits_{s\in\mathcal{S}}|\mathcal{H}_{bm}^{(s)}|\Big)
+|\mathcal{G}| \Big(\prod\limits_{b\in\mathcal{B}}\prod\limits_{m\in\mathcal{M}_b}\prod\limits_{s\in\mathcal{S}}|\mathcal{H}_{bm}^{(s)}||\mathcal{L}_b|\Big)
\\&\quad\textstyle+2\sum\limits_{b\in\mathcal{B}}|\mathcal{G}|\Big(\prod\limits_{m\in\mathcal{M}_b}\prod\limits_{s\in\mathcal{S}}|\mathcal{H}_{bm}^{(s)}|\Big)
\end{align*}
 affine constraints. Both increase exponentially with the number of BSs, UEs, and sub-carriers.

We briefly summarize the information exchange in the fronthaul of the statistics-based resource allocation approach. In the first time portion $\varsigma(n)$ of each frame $n$, BSs first upload \eqref{Eq: estimated channel dist} and the empirically estimated mean data arrivals to the controller, then the controller feeds back $\{\Psi^{*}_{\rm I}(t),t\in\mathcal{T}(n)\}$ to BSs.

\subsection{State Realization-based Resource Allocation  at the Controller}\label{Sec: Lyapunov optimization}

The main drawback of the former solution is that the number of optimization variables and constraints increases exponentially with the increasing number of BSs, UEs, and sub-carriers. Thus, as the network becomes dense,
 the controller suffers from high computational complexity, and uploading local information imposes a heavy burden on the fronthaul. To remedy to this, we employ tools from  Lyapunov optimization, a stochastic optimization approach with low computational complexity that requires only the current state realization instead of global state statistics.

\subsubsection{Lyapunov Optimization Framework}
To solve \eqref{Eq: Estimated auxiliary network problem}, we first introduce a continuous random variable $\gamma_b$ for each BS $b\in\mathcal{B}$ with the probability density function $f_{\gamma_b}$ over the domain $[0, v_{b}^{\rm max}].$
In addition, referring to \eqref{Eq: Network utility}, we define an auxiliary function 
\begin{equation}\label{Eq: Auxiliary network utility}
\textstyle\sum\limits_{b\in\mathcal{B}}\hat{\lambda}_b\ln\big(1+\gamma_b\big),
\end{equation}
for the network utility \eqref{Eq: Learning-objecitve}. With the aid of the auxiliary random variables and function,  \eqref{Eq: Estimated auxiliary network problem}  is equivalent to
\begin{subequations}\label{Eq: Lyapunov network problem}
\begin{IEEEeqnarray}{cl}
\hspace{-2em}\underset{\Pr(\boldsymbol{\alpha}|\boldsymbol{\omega}),\theta_b( \boldsymbol{\omega}_b),f_{\gamma_b}}{\text{maximize}}&\textstyle~~\sum\limits_{b\in\mathcal{B}}\hat{\lambda}_b\cdot\mathbb{E}\big[\ln(1+\gamma_b)\big]\label{Eq: Lyapunov network problem-1}
\\\hspace{-2em}\mbox{subject to}&\textstyle~~\mathbb{E}[\gamma_b]=\sum\limits_{\boldsymbol{\omega}\in\mathcal{W}}\sum\limits_{\boldsymbol{\alpha}\in\mathcal{A}}  \hat{\Pr}(\boldsymbol{\omega})\notag
\\\hspace{-2em}&~~\qquad\times\Pr(\boldsymbol{\alpha}|  \boldsymbol{\omega})v_{b}(\boldsymbol{\omega},\boldsymbol{\alpha}),~\forall\,b\in\mathcal{B},\label{Eq: Lyapunov network problem-2}
\\\hspace{-2em}&~~\mbox{QoS requirement \eqref{Eq: Learning-QoS}},\notag
\\\hspace{-2em}&~~\mbox{CCE constraints \eqref{Eq: Learning-CCE-1} and \eqref{Eq: Learning-CCE-2}},\notag
\\\hspace{-2em}&~~\mbox{\eqref{Eq: Learning-dist-1}, \eqref{Eq: Learning-dist-2}, and \eqref{Eq: Learning-dist-3}}.\notag
\end{IEEEeqnarray}
\end{subequations}
%
%
Subsequently, we  introduce virtual queues for each of the constraints \eqref{Eq: Learning-QoS}, \eqref{Eq: Learning-CCE-1}, \eqref{Eq: Learning-CCE-2}, and \eqref{Eq: Lyapunov network problem-2}.
For the CCE constraints \eqref{Eq: Learning-CCE-1} and \eqref{Eq: Learning-CCE-2}, we have virtual queues $Y^{\boldsymbol{\chi}_b}_{\boldsymbol{\omega}_b}$ and $Z_{b},$ respectively, which evolve over time slots as follows:
\begin{align}
&Y^{\boldsymbol{\chi}_b}_{\boldsymbol{\omega}_b}(t+1)=\max\big\{Y^{\boldsymbol{\chi}_b}_{\boldsymbol{\omega}_b}(t)+\tilde{v}_{b}(\boldsymbol{\omega}_b,\boldsymbol{\chi}_b;t)-\tilde{\theta}_b(\boldsymbol{\omega}_b;t),0\big\},\notag
\\&\hspace{10.5em}~\forall\,b\in\mathcal{B},\boldsymbol{\omega}_b\in\mathcal{W}_b, \boldsymbol{\chi}_b \in\mathcal{A}_b,\label{Eq: VQ-Y}
\\&\textstyle  Z_{b}(t+1)=\max\Big\{Z_b(t)+\sum\limits_{\boldsymbol{\omega}_b\in\mathcal{W}_b}\tilde{\theta}_b(\boldsymbol{\omega}_b;t)\notag
\\&\hspace{9.5em}-v_{b}(\mathbf{h}(t),\boldsymbol{\alpha}(t)),0\big\},~\forall\,b\in\mathcal{B},\label{Eq: VQ-Z}
\end{align}
where
\begin{align}
\tilde{\theta}_b(\boldsymbol{\omega}_b;t)=
\begin{cases}
\tilde{\theta}_b(\boldsymbol{\omega}_b;t)\in[0, v_{b,\rm max}],&\mbox{if }\boldsymbol{\omega}_b=\mathbf{h}_b(t),
\\0,&\mbox{otherwise},
\end{cases}
\label{Eq: Auxiliary function-1}
\end{align}
and
\begin{align}
\tilde{v}_{b}(\boldsymbol{\omega}_{b},\boldsymbol{\chi}_{b};t)=
\begin{cases}v_{b}(\boldsymbol{\omega}_b,\mathbf{h}_{-b}(t),\boldsymbol{\chi}_b,\boldsymbol{\alpha}_{-b}(t)),&\mbox{if }
\boldsymbol{\omega}_b=\mathbf{h}_b(t),
\\0,&\mbox{otherwise,}
\end{cases}
\label{Eq: Auxiliary function-2}
\end{align}
are the auxiliary functions corresponding to $\theta_b(\boldsymbol{\omega}_b)$ and $v_b(\boldsymbol{\omega},\boldsymbol{\chi}_{b},\boldsymbol{\alpha}_{-b})$ in \eqref{Eq: Learning-CCE-1}, respectively.
$\mathbf{h}(t)=[\mathbf{h}_1(t),\cdots,\mathbf{h}_{|\mathcal{B}|}(t)]\in\mathcal{W}$ is the global state realization in time slot $t$.
In addition, virtual queues
\begin{align}
\hspace{-0.3em}D_{b}(t+1)&=\max\big\{D_{b}(t)+\hat{\lambda}_b-v_{b}(\mathbf{h}(t),\boldsymbol{\alpha}(t)),0\big\},~\forall\,b\in\mathcal{B},\label{Eq: VQ-D}
\\\hspace{-0.3em}F_{b}(t+1)&=F_{b}(t)+\gamma_{b}(t)-v_{b}(\mathbf{h}(t),\boldsymbol{\alpha}(t)),~\forall\,b\in\mathcal{B},\label{Eq: VQ-F}
\end{align}
are introduced for \eqref{Eq: Learning-QoS} and \eqref{Eq: Lyapunov network problem-2}, respectively, with $0\leq \gamma_b(t)\leq v_{b}^{\rm max}.$ Using Lyapunov optimization, the optimal solution of \eqref{Eq: Lyapunov network problem-1} is obtained by optimizing the equivalent long-term time-averaged objective $\lim\limits_{T\to\infty}\frac{1}{T}\sum_{t=1}^{T}\sum_{b\in\mathcal{B}}\hat{\lambda}_b\ln\big(1+\gamma_b(t)\big),$
and the expectation constraints in problem \eqref{Eq: Lyapunov network problem} will be satisfied as long as the corresponding virtual queues are mean rate stable \cite{Neely/Stochastic}. By incorporating the stability requirements for all virtual queues, problem \eqref{Eq: Lyapunov network problem} can be equivalently rewritten as
\begin{subequations}\label{Eq: Realized-based obj}
\begin{IEEEeqnarray}{cl}
\hspace{-2.5em}\underset{\boldsymbol{\alpha}(t),\tilde{\theta}_b( \mathbf{h}_b(t);t),\gamma_b(t)}{\text{maximize}}&\textstyle~~ \lim\limits_{T\to\infty}\frac{1}{T}\sum\limits_{t=1}^{T}\sum\limits_{b\in\mathcal{B}}\hat{\lambda}_b\ln\big(1+\gamma_b(t)\big)
\\\hspace{-2.5em}\mbox{subject to}&~~\mbox{Stability of \eqref{Eq: VQ-Y}, \eqref{Eq: VQ-Z}, \eqref{Eq: VQ-D}, \eqref{Eq: VQ-F}},\notag
\\\hspace{-2.5em}&~~\boldsymbol{\alpha}(t)\in\mathcal{A},~\forall\,t,
\\\hspace{-2.5em}&~~\tilde{\theta}_b(\mathbf{h}_b(t);t)\in[0, v_{b}^{\rm max}],~\forall\,t,b\in\mathcal{B},
\\\hspace{-2.5em}&~~\gamma_b(t)\in[0, v_{b}^{\rm max}],~\forall\,t,b\in\mathcal{B}.
\end{IEEEeqnarray}
\end{subequations}
Here, we note that although problems \eqref{Eq: Estimated auxiliary network problem} and \eqref{Eq: Realized-based obj}  are equivalent, the optimal solution of the former problem is found in one time instant whereas the optimum of the latter incurs a larger delay.

For notational simplicity,  $\mathbf{\Xi}(t)=\big(Y^{\boldsymbol{\chi}_b}_{\boldsymbol{\omega}_b}(t),Z_{b}(t),D_{b}(t),F_{b}(t),b\in\mathcal{B},\boldsymbol{\omega}_b\in\mathcal{W}_b,\boldsymbol{\chi}_b \in\mathcal{A}_b\big)$ denotes the combined queue vector. Then, we express the conditional Lyapunov drift-plus-penalty for time slot $t$ as
\begin{equation}
\textstyle\mathbb{E}\big[L(\mathbf{\Xi}(t+1))-L(\mathbf{\Xi}(t))-W\sum\limits_{b\in\mathcal{B}}\hat{\lambda}_b\ln\big(1+\gamma_b(t)\big)\big|\mathbf{\Xi}(t)\big],\label{Eq: Drift and penalty}
\end{equation}
where  
%
%
%
 $ L(\mathbf{\Xi}(t))=\frac{1}{2}\sum_{b\in\mathcal{B}}\big(\sum_{\boldsymbol{\omega}_b\in\mathcal{W}_b}\sum_{\boldsymbol{\chi}_b\in\mathcal{A}_b}Y^{\boldsymbol{\chi}_b}_{\boldsymbol{\omega}_b}(t)^2+Z_{b}(t)^2
+D_{b}(t)^2+F_{b}(t)^2\big)$
%
%
%
is the Lyapunov function, and $\kappa\geq 0$ is the tradeoff factor between utility optimality and queue stability. Subsequently, applying \eqref{Eq: VQ-Y}, \eqref{Eq: VQ-Z}, \eqref{Eq: VQ-D}, \eqref{Eq: VQ-F}, and $(\max\{x,0\})^2\leq x^2$ to \eqref{Eq: Drift and penalty}, we derive
\begin{figure*}
\begin{align*}
&\textstyle K_0[k] =\sum\limits_{b\in\mathcal{B}}\sum\limits_{m\in\mathcal{M}_b}\sum\limits_{s\in\mathcal{S}}\Bigg[ \sum\limits_{\boldsymbol{\chi}_b\in\mathcal{A}_b }Y^{\boldsymbol{\chi}_b}_{\mathbf{h}_b}\notag
\ln\bigg(1+\frac{\chi_{bm}^{(s)}h_{bm}^{(s)}}{\sigma^2} +\sum\limits_{b'\in\mathcal{B}\setminus b}\sum\limits_{ m'\in\mathcal{M}_{b'}}\frac{\hat{P}_{b'm'}^{(s)} [k] [\mathcal{H}_{b'm}^{(s)}]_{\rm max}}{\sigma^2}  \bigg)-F_b\cdot\mathbbm{1}\big\{F_{b}<0\big\}\notag
\\&\textstyle 
\times\ln\bigg(1+\frac{\hat{P}_{bm}^{(s)}[k]h_{bm}^{(s)}}{\sigma^2}+\sum\limits_{b'\in\mathcal{B}\setminus b\atop m'\in\mathcal{M}_{b'}}  \frac{\hat{P}_{b'm'}^{(s)}[k][\mathcal{H}_{b'm}^{(s)}]_{\rm max}}{\sigma^2}  \bigg)
+\big(D_{b}+F_{b}\cdot\mathbbm{1}\big\{F_{b}>0\big\}+Z_{b}\big)
\ln\bigg(1+\sum\limits_{b'\in\mathcal{B}\setminus b\atop m'\in\mathcal{M}_{b'}} \frac{\hat{P}_{b'm'}^{(s)}[k][\mathcal{H}_{b'm}^{(s)}]_{\rm max} }{\sigma^2}\bigg)\bigg],
\\&K_{bm}^{(s)}[k]\textstyle=\sum\limits_{b'\in\mathcal{B}\setminus b}\sum\limits_{ m'\in\mathcal{M}_{b'}}\Bigg[\frac{\big(D_{b'}+F_{b'}\cdot\mathbbm{1}\{F_{b'}>0\}+Z_{b'}\big)[\mathcal{H}_{bm'}^{(s)}]_{\rm max}}{\Big(\sigma^2+\sum\limits_{i\in\mathcal{B}\setminus b'}\sum\limits_{ j\in\mathcal{M}_{i}} \hat{P}_{ij}^{(s)}[k][\mathcal{H}_{im'}^{(s)}]_{\rm max}  \Big)}
 +\sum\limits_{\boldsymbol{\chi}_{b'}\in\mathcal{A}_{b'}}\frac{Y^{\boldsymbol{\chi}_{b'}}_{\mathbf{h}_{b'}}[\mathcal{H}_{bm'}^{(s)}]_{\rm max}}{\Big(\sigma^2 +\chi_{b'm'}^{(s)}h_{b'm'}^{(s)} +\sum\limits_{i\in\mathcal{B}\setminus b'}\sum\limits_{ j\in\mathcal{M}_{i}}\hat{P}_{ij}^{(s)} [k][\mathcal{H}_{im'}^{(s)}]_{\rm max}\Big)}\notag
\\&\textstyle \hspace{4.5em}-\frac{[\mathcal{H}_{bm'}^{(s)}]_{\rm max}F_{b'}\cdot\mathbbm{1}\{F_{b'}<0\}}{\Big(\sigma^2+\hat{P}_{b'm'}^{(s)}[k]h_{b'm'}^{(s)}+\sum\limits_{i\in\mathcal{B}\setminus b'}\sum\limits_{ j\in\mathcal{M}_{i}} \hat{P}_{ij}^{(s)}[k][\mathcal{H}_{im'}^{(s)}]_{\rm max}  \Big)}\Bigg]
-\frac{h_{bm}^{(s)}F_{b'}\cdot\mathbbm{1}\{F_{b'}<0\}}{\Big(\sigma^2+\hat{P}_{bm}^{(s)}[k]h_{bm}^{(s)} +\sum\limits_{b'\in\mathcal{B}\setminus b}\sum\limits_{ m'\in\mathcal{M}_{b'}}\hat{P}_{b'm'}^{(s)}[k] [\mathcal{H}_{b'm}^{(s)}]_{\rm max} \Big)}.
\end{align*}
\noindent\makebox[\linewidth]{\rule{0.84\paperwidth}{0.4pt}}
\end{figure*}
\begin{align}
& \eqref{Eq: Drift and penalty}\leq
\textstyle \frac{1}{2}\sum\limits_{b\in\mathcal{B}}|\mathcal{M}_b|^2\lambda_{\rm max}^2+\sum\limits_{b\in\mathcal{B}}\big(\frac{1}{2}|\mathcal{A}_b|+\frac{3}{2}\big)\big(|\mathcal{M}_b||\mathcal{S}|R_{\rm max}\big)^2\notag
 \\&+\sum\limits_{b\in\mathcal{B}}\mathbb{E}\bigg[\sum\limits_{\boldsymbol{\chi}_b\in\mathcal{A}_b}Y^{\boldsymbol{\chi}_b}_{\mathbf{h}_b(t)}(t)\big(\tilde{v}_{b}(\mathbf{h}_b(t),\boldsymbol{\chi}_b;t)-\tilde{\theta}_b(\mathbf{h}_b(t);t)\big)\notag
\\& 
+Z_b(t)\big(\tilde{\theta}_b(\mathbf{h}_b(t);t)-v_{b}(t)\big)  +D_{b}(t)\big(\hat{\lambda}_b-v_{b}(t)\big)\notag
\\& +F_{b}(t)\big(\gamma_b(t)-v_{b}(t)\big)
-\kappa\hat{\lambda}_b\ln\big(1+\gamma_b(t)\big)\big|\mathbf{\Xi}(t)\Big].\label{Eq: Lyapunov bound-1}
\end{align}
The solution to \eqref{Eq: Realized-based obj} can be obtained by minimizing the upper bound of the conditional Lyapunov drift-plus-penalty, i.e.,  \eqref{Eq: Lyapunov bound-1}, in each time slot $t$ \cite{Neely/Stochastic}. In this regard, we have three decomposed sub-problems which are detailed as follows.

The first decomposed problem is
\begin{equation*}
\mbox{{\bf P1:}}~~ \underset{ 0\leq \gamma_b(t)\leq v_{b}^{\rm max}}{\text{minimize}}~~F_{b}(t)\gamma_b(t)-\kappa\hat{\lambda}_b\ln\big(1+\gamma_b(t)\big)
\end{equation*}
 $\forall\,b\in\mathcal{B},$ with the solution
\begin{equation*}
\begin{aligned}
\gamma_b(t)=
\begin{cases}
v_{b}^{\rm max},&  \mbox{if~}F_{b}(t)\leq \frac{\kappa\hat{\lambda}_b}{v_{b}^{\rm max}+1},
\\\frac{\kappa\hat{\lambda}_b}{F_{b}(t)}-1,&  \mbox{if~}\frac{\kappa\hat{\lambda}_b}{v_{b}^{\rm max}+1}< F_{b}(t)\leq \kappa\hat{\lambda}_b,
\\0,&  \mbox{otherwise.}
\end{cases}
\end{aligned}
\end{equation*}
The second decomposed problem is
\begin{IEEEeqnarray*}{ccl}
\mbox{{\bf P2:}}~~&\underset{\tilde{\theta}_b( \mathbf{h}_b(t);t)}{\text{minimize}} & \textstyle~~\Big(Z_{b}(t)-\sum\limits_{\boldsymbol{\chi}_b\in\mathcal{A}_b}Y^{\boldsymbol{\chi}_b}_{\mathbf{h}_b(t)}(t)\Big)\tilde{\theta}_b( \mathbf{h}_b(t);t)\label{Eq: Second subproblem}
\\&\mbox{subject to}&~~\eqref{Eq: Auxiliary function-1},
\end{IEEEeqnarray*}
for all $b\in\mathcal{B},$ where the solution is
\begin{equation*}
\begin{aligned}
\tilde{\theta}_b(\mathbf{h}_b(t);t)=
\begin{cases}
v_{b}^{\rm max},&  \mbox{if~}
Z_{b}(t)<\sum\limits_{\boldsymbol{\chi}_b \in\mathcal{A}_b}Y^{\boldsymbol{\chi}_b}_{\mathbf{h}_b(t)}(t),
\\0,&  \mbox{otherwise.}
\end{cases}
\end{aligned}
\end{equation*}
Finally, we have the third decomposed problem as
\begin{subequations}\label{Eq: Third subproblem objective}
\begin{align}
&\textstyle\mbox{{\bf P3:}}~~\underset{\boldsymbol{\alpha}(t)\in\mathcal{A}}{ \text{minimize}} ~\sum\limits_{b\in\mathcal{B}}\sum\limits_{m\in\mathcal{M}_b}\sum\limits_{s\in\mathcal{S}}\Bigg[
\sum\limits_{\boldsymbol{\chi}_b\in\mathcal{A}_b }Y^{\boldsymbol{\chi}_b}_{\mathbf{h}_b(t)}(t)\notag
\\&\textstyle\times\ln\bigg(1+\frac{\chi_{bm}^{(s)}h_{bm}^{(s)}(t)}{\sigma^2} +\sum\limits_{b'\in\mathcal{B}\setminus b\atop m'\in\mathcal{M}_{b'}} \frac{ P_{b'm'}^{(s)}(t)[\mathcal{H}_{b'm}^{(s)}]_{\rm max}}{\sigma^2}  \bigg)\label{Eq: Third subproblem objective-1}
\\&\textstyle -F_{b}(t)\cdot\mathbbm{1}\big\{F_{b}(t)<0\big\}\notag
 \\&\textstyle\times\ln\bigg(1+\frac{P_{bm}^{(s)}(t)h_{bm}^{(s)}(t)}{\sigma^2}+\sum\limits_{b'\in\mathcal{B}\setminus b\atop m'\in\mathcal{M}_{b'}} \frac{P_{b'm'}^{(s)}  (t)[\mathcal{H}_{b'm}^{(s)}]_{\rm max}}{\sigma^2} \bigg)\label{Eq: Third subproblem objective-2}
\\&\textstyle +\big(D_{b}(t)+F_{b}(t)\cdot\mathbbm{1}\big\{F_{b}(t)>0\big\}+Z_{b}(t)\big)\notag
 \\&\textstyle\times
\ln\bigg(1+\sum\limits_{b'\in\mathcal{B}\setminus b}\sum\limits_{ m'\in\mathcal{M}_{b'}} \frac{P_{b'm'}^{(s)}(t)[\mathcal{H}_{b'm}^{(s)}]_{\rm max} }{\sigma^2}\bigg)\label{Eq: Third subproblem objective-3}
\\&\textstyle -\big(D_{b}(t)+F_{b}(t)\cdot\mathbbm{1}\big\{F_{b}(t)>0\big\}+Z_{b}(t)\big)\notag
\\&\textstyle\times\ln\bigg(1+\frac{P_{bm}^{(s)}(t)h_{bm}^{(s)}(t)}{\sigma^2}+\sum\limits_{b'\in\mathcal{B}\setminus b\atop m'\in\mathcal{M}_{b'}} \frac{P_{b'm'}^{(s)}(t)[\mathcal{H}_{b'm}^{(s)}]_{\rm max}}{\sigma^2}  \bigg)\label{Eq: Third subproblem objective-4}
\\&\textstyle -\Big(\sum\limits_{\boldsymbol{\chi}_b\in\mathcal{A}_b }Y^{\boldsymbol{\chi}_b}_{\mathbf{h}_b(t)}(t)-F_{b}(t)\cdot\mathbbm{1}\big\{F_{b}(t)<0\big\}\Big)\notag
 \\&\textstyle\times\ln\bigg(1+\sum\limits_{b'\in\mathcal{B}\setminus b}\sum\limits_{ m'\in\mathcal{M}_{b'}}\frac{P_{b'm'}^{(s)} (t) [\mathcal{H}_{b'm}^{(s)}]_{\rm max}}{\sigma^2}  \bigg)\bigg]
\label{Eq: Third subproblem objective-5}
\end{align}
\end{subequations}
which is an  NP-hard  integer programming problem due to the indicator function and the finite set $\mathcal{L}_b$ in $\mathcal{A}.$ In order to have a tractable analysis, we relax the power constraints in the global action space  $\mathcal{A}$ as
\begin{align}\label{Eq: Relaxed power}
\boldsymbol{\alpha}(t)\in\mathcal{A}\Longrightarrow
\begin{cases}
P_{bm}^{(s)}(t)\geq 0,~\forall\,b\in\mathcal{B},m\in\mathcal{M}_b,s\in\mathcal{S},
\\\sum\limits_{m\in\mathcal{M}_b}\sum\limits_{s\in\mathcal{S}} P_{bm}^{(s)}(t)\leq |\mathcal{S}|\totalpower b,~\forall\,b\in\mathcal{B}.
\end{cases}
\end{align}
However, since \eqref{Eq: Third subproblem objective-1}-\eqref{Eq: Third subproblem objective-3} are concave functions whereas \eqref{Eq: Third subproblem objective-4} and \eqref{Eq: Third subproblem objective-5} are convex functions, {\bf P3}, with the relaxed power constraints \eqref{Eq: Relaxed power}, belongs to the family of difference of convex programming problems which cannot be solved by  standard approaches in convex optimization. In this regard, the convex-concave procedure (CCP) provides an iterative solution  which converges to a locally optimal solution \cite{Lipp:2015:CCP}.

\subsubsection{Convex-concave Procedure for the Decomposed Problem}
Let us  detail the derivation of the CCP as follows. We first denote the sum of concave functions as 
\begin{equation}\label{Eq: Concave function}
f(\boldsymbol{\alpha})=\eqref{Eq: Third subproblem objective-1}+\eqref{Eq: Third subproblem objective-2}+\eqref{Eq: Third subproblem objective-3},
\end{equation}
which will be convexified by the first order Taylor approximation.
In iteration $k,$ we first select a feasible global action $\hat{\boldsymbol{\alpha}}[k]$ from the relaxed power constraints \eqref{Eq: Relaxed power} and convexify the concave function $f(\boldsymbol{\alpha})$ according to
\begin{align}\label{Eq: Convexified function}
&\hat{f}(\boldsymbol{\alpha};\hat{\boldsymbol{\alpha}}[k])  \coloneqq f(\hat{\boldsymbol{\alpha}}[k])+\nabla f(\hat{\boldsymbol{\alpha}}[k])(\boldsymbol{\alpha}-\hat{\boldsymbol{\alpha}}[k])^{\rm T}\notag
\\&\quad \textstyle=K_0[k]+\sum\limits_{b\in\mathcal{B}}\sum\limits_{m\in\mathcal{M}_b}\sum\limits_{s\in\mathcal{S}}K_{bm}^{(s)}[k]\Big(P_{bm}^{(s)}-\hat{P}_{bm}^{(s)}[k]\Big),
\end{align}
where the superscript ${\rm T}$ denotes the transpose of the vector. $K_0[k]$ and $K_{bm}^{(s)}[k]$ are constants whose derivations are given on the top of this page.
Then, substituting \eqref{Eq: Convexified function} for \eqref{Eq: Concave function}, we have the convexified optimization problem
\begin{subequations}\label{Eq: Convexify decomposed}
\begin{IEEEeqnarray}{cl}
%
 \underset{\boldsymbol{\alpha}}{ \text{minimize}} &\textstyle~ \hat{f}(\boldsymbol{\alpha};\hat{\boldsymbol{\alpha}}[k])-\sum\limits_{b\in\mathcal{B}}\sum\limits_{m\in\mathcal{M}_b}\sum\limits_{s\in\mathcal{S}}\notag
\\& \textstyle~\Big[\big(D_{b}+F_{b}\cdot\mathbbm{1}\big\{F_{b}>0\big\}+Z_{b}\big)\notag
\\& \textstyle~\times\ln\Big(1+\frac{P_{bm}^{(s)}h_{bm}^{(s)}}{\sigma^2}+\sum\limits_{b'\in\mathcal{B}\setminus b\atop m'\in\mathcal{M}_{b'}} \frac{P_{b'm'}^{(s)}[\mathcal{H}_{b'm}^{(s)}]_{\rm max}}{\sigma^2}  \Big)\notag
\\&\textstyle  ~+\Big(\sum\limits_{\boldsymbol{\chi}_b\in\mathcal{A}_b }Y^{\boldsymbol{\chi}_b}_{\mathbf{h}_b}-F_{b}\cdot\mathbbm{1}\big\{F_{b}<0\big\}\Big)\notag
\\&\textstyle  \times \ln\bigg(1+\sum\limits_{b'\in\mathcal{B}\setminus b}\sum\limits_{ m'\in\mathcal{M}_{b'}}\frac{P_{b'm'}^{(s)}  [\mathcal{H}_{b'm}^{(s)}]_{\rm max}}{\sigma^2}  \bigg)\bigg]
\\\mbox{subject to}&~P_{bm}^{(s)}\geq 0,~\forall\,b\in\mathcal{B},m\in\mathcal{M}_b,s\in\mathcal{S},
\\&\textstyle ~\sum\limits_{m\in\mathcal{M}_b}\sum\limits_{s\in\mathcal{S}} P_{bm}^{(s)}\leq |\mathcal{S}|\totalpower b,~\forall\,b\in\mathcal{B}.
\end{IEEEeqnarray}
\end{subequations}
Due to the unavailability of a tractable solution to problem \eqref{Eq: Convexify decomposed}, we resort to CVX to numerically find the optimal power which is denoted by $\boldsymbol{\alpha}^{*}_{\rm ccp}[k].$ Subsequently, we set $\boldsymbol{\alpha}^{*}_{\rm ccp}[k]$ as the reference point for convexifying $f(\boldsymbol{\alpha})$ in the next iteration, i.e., iteration $k+1.$ The iterative procedure stops when the predetermined stopping criterion is satisfied. In the above explanation of the CCP, we remove the time-dependent index $t$ in order to avoid abuse of notation, but the CCP algorithm for solving ${\bf P3}$ is executed for each $t$. The steps of the CCP algorithm are shown in Algorithm \ref{Alg: CCP}. After obtaining the converged solution from the CCP algorithm, the controller finds $\boldsymbol{\alpha}(t)\in\mathcal{A},$ with respect to the nearest Euclidean distance $||\boldsymbol{\alpha}(t)-\boldsymbol{\alpha}_{\rm ccp}^{*}||$,  which is the control recommendation based on the current state realization $\mathbf{h}(t).$ 
Note that the decoupled problems{ \bf P1}, {\bf P2}, and {\bf P3}, and all virtual queues are solved and updated in each time slot $t.$

\begin{algorithm}[t]
  \caption{CCP for solving {\bf P3}.}
  \begin{algorithmic}[1]
    \State Initialize a feasible point $\hat{\boldsymbol{\alpha}}[0]$ from \eqref{Eq: Relaxed power} and  $k=0.$
    \Repeat
      \State Convexify $f(\boldsymbol{\alpha})$ by $\hat{f}(\boldsymbol{\alpha};\hat{\boldsymbol{\alpha}}[k]).$
      \State Solve problem \eqref{Eq: Convexify decomposed} and denote the optimal solution as $\boldsymbol{\alpha}^{*}_{\rm ccp}[k].$
      \State Update $\hat{\boldsymbol{\alpha}}[k+1]=\boldsymbol{\alpha}^{*}_{\rm ccp}[k]$ and $k\leftarrow k+1.$
    \Until{Stopping criteria are satisfied.}
  \end{algorithmic}\label{Alg: CCP}
\end{algorithm}

However, since the fronthaul is activated every $T_0$ slots, a real-time state realization is not available at the controller for solving {\bf P3}, and none of the virtual queues required in {\bf P3} can be updated in real-time. To tackle this issue, we first note that as time evolves, the time-averaged queue value will converge, and the instantaneous queue length will approach the converged value \cite{Online_learning}. Hence, instead of the instantaneous queue value, the controller considers the time-averaged queue values up to slot $t=(n-1)T_0$ (i.e., the end slot of the last time frame $n-1$). That is, we substitute
\begin{equation}\label{Eq: Average queue for CCP}
\begin{cases}
\frac{1}{(n-1)T_0}\sum\limits_{\tau=1}^{(n-1)T_0}Y^{\boldsymbol{\chi}_b}_{\boldsymbol{\omega}_b}(\tau),
\\\frac{1}{(n-1)T_0}\sum\limits_{\tau=1}^{(n-1)T_0}Z_{b}(\tau),
\end{cases}
\begin{cases}
\frac{1}{(n-1)T_0}\sum\limits_{\tau=1}^{(n-1)T_0}D_{b}(\tau),
\\\frac{1}{(n-1)T_0}\sum\limits_{\tau=1}^{(n-1)T_0}F_{b}(\tau),
\end{cases}
\end{equation}
into $Y^{\boldsymbol{\chi}_b}_{\boldsymbol{\omega}_b}(t)$, $Z_b(t)$, $D_b(t)$, and $F_b(t)$ in {\bf P3}, respectively. Subsequently, the controller finds a global action realization $\boldsymbol{\alpha}_{\rm II}\in\mathcal{A}$ for each global state realizations $\boldsymbol{\omega}\in\mathcal{W}$ using Algorithm \ref{Alg: CCP} and \eqref{Eq: Average queue for CCP} for the queue values.
Similar to Section \ref{Sec: Empirical information}, from all global states and the corresponding CCP-converged transmit power $\boldsymbol{\alpha}_{\rm II},$ the controller finds the corresponding mapping rule $\Psi^{*}_{\rm II}(n):\mathcal{W}\to\mathcal{A},$ i.e., $\Psi^{*}_{\rm II}(\boldsymbol{\omega};n)=\boldsymbol{\alpha}_{\rm II},$ and feeds this rule back to BSs. In the remaining time slots of frame $n,$ the BSs serve the UEs and gather the state realizations $\{\mathbf{h}_b(t),t\in\mathcal{T}(n)\}.$ 
At the beginning of  the next frame $n+1,$ each BS $b$ uploads the $T_0$ gathered state realizations during time frame $a.$ Having the past state realization information, the controller solves { \bf P1}, {\bf P2}, and {\bf P3}, and updates  \eqref{Eq: VQ-Y}, \eqref{Eq: VQ-Z}, \eqref{Eq: VQ-D}, and \eqref{Eq: VQ-F}, for time slots $t\in\mathcal{T}(n)$. Finally, the controller calculates \eqref{Eq: Average queue for CCP} up to $t=nT_0$ and finds the recommendations for time frame $n+1$.
The steps of the controller's resource allocation procedures at the beginning of time frame $n$ are provided in Algorithm \ref{Alg: Approach II}.
Analogous to Section \ref{Sec: Empirical information}, we assume that BSs' state spaces and all possible state-to-action mapping rules are known at both the controller and all BSs, and each possible state realization and mapping rule are represented by the real values. Given the required transmission rate $R_{\rm unit}$ for sending a single real value, we have $(T_0+1)R_{\rm unit}$ for $R^{\rm U}$ in \eqref{Eq: FH time cost}, where the value 1 accounts for the estimated mean data arrival $\hat{\lambda}_b,$  and $R^{\rm F}=R_{\rm unit}$ in \eqref{Eq: FH download cost}.
For the computational complexity in the state realization-based approach, {\bf P1} and {\bf P2} have $2|\mathcal{B}|$ sub-problems in which each sub-problem has one optimization variable and two affine constraints.
Additionally, there are $ |\mathcal{S}|\big( \sum_{b\in\mathcal{B}}|\mathcal{M}_b|\big)$ optimization variables and $|\mathcal{B}|+ |\mathcal{S}| \big(\sum_{b\in\mathcal{B}}|\mathcal{M}_b|\big)$ affine constraints in the convexified problem (36) of {\bf P3}. The computational complexity increases linearly with the number of BSs, UEs, and sub-carriers. In contrast with the statistics-based approach, the state realization-based approach has a lighter burden in the fronthaul and a lighter computational complexity at the SDN controller.
The information exchange in the fronthaul of the realization-based resource allocation approach is summarized as follows. In the beginning time portion $\varsigma(n)$ of each frame $n$, the BSs upload the state realizations during last frame $n-1$, i.e., $\{\mathbf{h}_b(t),t\in\mathcal{T}(n-1)\},\forall\,b\in\mathcal{B},$ and the empirically estimated mean data arrivals to the controller. Then, the controller feeds back $\Psi^{*}_{\rm II}(n)$ to the BSs.

The above two resource allocation approaches achieve the same performance given the identical statistics of channels and the time penalty. However, due to different burdens on the fronthaul which affect the time penalty, the two approaches do not necessarily provide the identical optimal utility.

\begin{algorithm}[t]
  \caption{State realization-based resource allocation.}
  \begin{algorithmic}[1]
    \Require
      $\{\hat{\lambda}_b,\mathbf{h}_b(t),b\in\mathcal{B},t\in\mathcal{T}(n-1)\},$ and all virtual queues  (i.e., \eqref{Eq: VQ-Y}, \eqref{Eq: VQ-Z}, \eqref{Eq: VQ-D}, and \eqref{Eq: VQ-F}) for $t=(n-2)T_0+1.$
    \Ensure
      Optimal $\Psi^{*}_{\rm II}(n).$
    \State Initialize $\tau=(n-2)T_0+1.$
    \Repeat
      \State Solve {\bf P1}, {\bf P2}, and {\bf P3} for $t=\tau.$
      \State Update \eqref{Eq: VQ-Y}, \eqref{Eq: VQ-Z}, \eqref{Eq: VQ-D}, and \eqref{Eq: VQ-F} for $t=\tau+1$ and $\tau\leftarrow\tau+1.$
    \Until $\tau>(n-1)T_0.$
\State Solve {\bf P3} for each $\boldsymbol{\omega}\in\mathcal{W}$ with \eqref{Eq: Average queue for CCP} and  find the mapping rule  $\Psi^{*}_{\rm II}(n):\mathcal{W}\to\mathcal{A}.$
  \end{algorithmic}\label{Alg: Approach II}
\end{algorithm}

\subsection{User Scheduling at the Base Station}\label{Sec: Scheduling}

After observing $\boldsymbol{\omega}_b=\mathbf{h}_b(t)$ in each slot $t$ and referring to the received mapping rule $\Psi^{*}_{\rm I}(t)$ (or $\Psi^{*}_{\rm II}(n)$), BS $b$ obtains a suggested transmit power vector $\boldsymbol{\alpha}^{*}_b(t)\coloneqq (\alpha_{bm}^{(s)*},m\in\mathcal{M}_b,s\in\mathcal{S})\in\mathcal{A}_b$ to serve the UEs  \cite{Neely:13:Arxiv}. 
Although the suggested action $\boldsymbol{\alpha}^{*}_{b}(t)$ achieves a DL rate which is higher than the data arrival rate, the controller might suggest that the BS serves the UEs with better channel quality while deferring the service for UEs with higher queue lengths incurring high latency. To address this  concern, BS $b$ schedules its UEs based on the queue lengths and optimizes the transmit power over the set of allocated sub-carriers 
 $\mathcal{X}_b(t)=\{s\in\mathcal{S}|\sum_{m\in\mathcal{M}_b}\alpha_{ bm}^{(s)*}(t)> 0\}$.
Accordingly, incorporating the available sub-carriers, we remodel the BS's objective as
\begin{subequations}\label{Eq: aided-bs-obj}
\begin{IEEEeqnarray}{cl}
\hspace{-2em}\underset{P^{(s)}_{bm}(t)\in\mathcal{L}_b}{\mbox{maximize}}&\textstyle ~~\sum\limits_{m\in\mathcal{M}}\sum\limits_{s\in\mathcal{S}}\bar{R}_{bm}^{(s)}
\\\hspace{-2em}\mbox{subject to}& ~~\lim\limits_{t\to\infty}\frac{\mathbb{E}\left[|Q_{bm}(t)|\right]}{t}\to 0,~\forall\,m\in\mathcal{M}_b,
\\\hspace{-2em}&\textstyle~~\sum\limits_{m\in\mathcal{M}_b}\mathbbm{1}\big\{ P_{bm}^{(s)}(t)>0\big\}\leq 1,~\forall\,t,s\in\mathcal{X}_b(t),
\\\hspace{-2em}&\textstyle~~\sum\limits_{m\in\mathcal{M}_b}\sum\limits_{s\in\mathcal{S}} P_{bm}^{(s)}(t)\leq |\mathcal{S}|\totalpower b,~\forall\,t,
\\\hspace{-2em}&~~P_{bm}^{(s)}(t)=0,~\forall\,t,m\in\mathcal{M}_b,s\notin\mathcal{X}_b(t),
\end{IEEEeqnarray}
\end{subequations}
which can be solved by dynamic programming. However, dynamic programing suffers from the curse of dimensionality. To address this, we resort to tools of Lyapunov optimization.
Given the combined queue vector $\mathbf{Q}_b(t)=(Q_{bm}(t), m\in\mathcal{M}_b),$ an upper bound on the conditional Lyapunov drift-plus-penalty for slot $t$ is given by
\begin{multline*}
 \textstyle\mathbb{E}\Big[L(\mathbf{Q}_b(t+1))-L(\mathbf{Q}_b(t))-\sum\limits_{m\in\mathcal{M}_b}\sum\limits_{ s\in\mathcal{S}}VR_{bm}^{(s)}(t)
\Big|\mathbf{Q}_b(t)\Big]
\\\textstyle\leq\frac{|\mathcal{M}_b|\big(|\mathcal{S}|^2R^2_{{\rm max}}+A_{{\rm max}}^2\big)}{2}-\mathbb{E}\bigg[\sum\limits_{m\in\mathcal{M}_b }\sum\limits_{ s\in\mathcal{S}}VR_{bm}^{(s)}(t)
\\  \textstyle
+\sum\limits_{m\in\mathcal{M}_b}Q_{bm}(t) \Big(\sum\limits_{s\in\mathcal{S}}R_{bm}^{(s)}(t)-\lambda_{bm}(t)\Big)
\big|\mathbf{Q}_b(t)\Big],
\end{multline*}
where the parameter $V$ reveals the tradeoff between queue stability and sum rate maximization. Analogously to \eqref{Eq: Realized-based obj},  BS $b$ minimizes the upper bound on the conditional Lyapunov drift-plus-penalty by solving
\begin{subequations}\label{Eq: BS NP-hard problem}
\begin{IEEEeqnarray}{cl}
\hspace{-2em} \underset{ P_{bm}^{(s)}\in\mathcal{L}_b}{\mbox{maximize}}& \textstyle~~\sum\limits_{m\in\mathcal{M}_b}\sum\limits_{s\in\mathcal{S}}(Q_{bm}+V)
\cdot\mathbb{E}_{I_{bm}^{(s)}}  \big[R_{bm}^{(s)}\big|\boldsymbol{\omega}_b,\mathcal{X}_b\big]\label{Eq: BS NP-hard problem-1}
\\\hspace{-2em}\mbox{subject to}&\textstyle~~\sum\limits_{m\in\mathcal{M}_b}\mathbbm{1}\big\{ P_{bm}^{(s)}>0\big\}\leq 1,~\forall\,t,s\in\mathcal{X}_b,\label{Eq: BS NP-hard problem-3}
\\\hspace{-2em}&\textstyle~~\sum\limits_{m\in\mathcal{M}_b}\sum\limits_{s\in\mathcal{S}} P_{bm}^{(s)}\leq |\mathcal{S}|\totalpower b,\label{Eq: BS NP-hard problem-4}
\\\hspace{-2em}& ~~P_{bm}^{(s)}=0,~\forall\,m\in\mathcal{M}_b,s\notin\mathcal{X}_b.\label{Eq: BS NP-hard problem-2}
\end{IEEEeqnarray}
\end{subequations}
 Note that \eqref{Eq: BS NP-hard problem}  is solved in each time slot $t$ although the time index is omitted for the sake of simplicity. Here, the expectation is over the conditional aggregate interference $I_{bm}^{(s)}=\sum_{b'\in\mathcal{B}\backslash b}\sum_{m'\in\mathcal{M}_{b'}}P_{b'm'}^{(s)}  h_{b'm}^{(s)}$ with the estimated distribution $\hat{\Pr}\big(I_{bm}^{(s)};t\big|\boldsymbol{\omega}_b,\mathcal{X}_b)$ in slot $t.$
However, due to the indicator function in \eqref{Eq: BS NP-hard problem-3} and the finite set $\mathcal{L}_b$, \eqref{Eq: BS NP-hard problem} is an  NP-hard  integer programming problem. Analogously to {\bf P3}, we rewrite \eqref{Eq: BS NP-hard problem}  as
\begin{subequations}\label{Eq: aided-bs-Lya-obj}
\begin{IEEEeqnarray}{cl}
 \underset{ P_{bm}^{(s)}}{\mbox{maximize}}&\textstyle ~~\sum\limits_{m\in\mathcal{M}_b}\sum\limits_{s\in\mathcal{S}}(Q_{bm}+V)
\\&\textstyle ~~\quad\times\mathbb{E}_{I_{bm}^{(s)}}  \bigg[\ln\bigg(1+\frac{P_{bm}^{(s)}h_{bm}^{(s)}}{\sigma^2+ I_{bm}^{(s)}} \bigg)\Big|\boldsymbol{\omega}_b,\mathcal{X}_b\bigg]\label{Eq: aided-bs-Lya-obj-1}
\\\mbox{subject to}& \textstyle~~\sum\limits_{m\in\mathcal{M}_b}\sum\limits_{s\in\mathcal{S}}P_{bm}^{(s)} \leq |\mathcal{S}|\totalpower{b},\label{Eq: aided-bs-Lya-obj-2}
\\&~~ P_{bm}^{(s)}  \geq 0,~\forall\,m\in\mathcal{M}_b,s\in\mathcal{X}_b,\label{Eq: aided-bs-Lya-obj-3}
\\&~~ P_{bm}^{(s)}  = 0,~\forall\,m\in\mathcal{M}_b,s\notin\mathcal{X}_b,\label{Eq: aided-bs-Lya-obj-4}
\end{IEEEeqnarray}
\end{subequations}
with the relaxed linear power constraint.
\begin{lemma}\label{lemma power}
The optimal solution  to \eqref{Eq: aided-bs-Lya-obj} is detailed as follows.  For all $m\in\mathcal{M}_b$ and $s\in\mathcal{X}_b$, the optimal $P_{bm}^{(s)*}>$   is found by
\begin{align}
\textstyle\mathbb{E}_{I_{bm}^{(s)}}\bigg[\frac{\big(Q_{bm}+V\big)h_{bm}^{(s)}}{\sigma^2+ I_{bm}^{(s)}+P_{bm}^{(s)*}h_{bm}^{(s)}}
 \bigg|\boldsymbol{\omega}_b,\mathcal{X}_b\bigg] =\gamma_b\label{Eq: BS water-willing}
\end{align}
if $\mathbb{E}_{I_{bm}^{(s)}}\Big[\frac{(Q_{bm}+V)h_{bm}^{(s)}}{\sigma^2+ I_{bm}^{(s)}}\Big|\boldsymbol{\omega}_b,\mathcal{X}_b \Big]>\gamma_b$.
Otherwise, $P_{bm}^{(s)*}=0.$
Here, the Lagrange multiplier $\gamma_b$ is chosen such that $\sum_{m\in\mathcal{M}_b}\sum_{s\in\mathcal{X}_b}P_{bm}^{(s)*} =|\mathcal{S}|\totalpower{b}.$
\end{lemma}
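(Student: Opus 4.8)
The plan is to recognize problem \eqref{Eq: aided-bs-Lya-obj} as a convex program and to characterize its solution through the Karush–Kuhn–Tucker (KKT) conditions. First I would establish convexity: for each fixed realization of $I_{bm}^{(s)}$ the map $P_{bm}^{(s)}\mapsto \ln\big(1+\frac{P_{bm}^{(s)}h_{bm}^{(s)}}{\sigma^2+I_{bm}^{(s)}}\big)$ is concave, and since concavity is preserved under the expectation $\mathbb{E}_{I_{bm}^{(s)}}[\cdot]$ and under nonnegative weighted sums (the weights $Q_{bm}+V$ are nonnegative), the objective \eqref{Eq: aided-bs-Lya-obj-1} is jointly concave in the power vector. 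The constraints \eqref{Eq: aided-bs-Lya-obj-2}--\eqref{Eq: aided-bs-Lya-obj-4} are affine, and a small uniform allocation over $s\in\mathcal{X}_b$ is strictly feasible, so Slater's condition holds; consequently the KKT conditions are both necessary and sufficient for global optimality.

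Next I would form the Lagrangian, attaching the multiplier $\gamma_b\ge 0$ to the sum-power budget \eqref{Eq: aided-bs-Lya-obj-2} and multipliers $\mu_{bm}^{(s)}\ge 0$ to the nonnegativity constraints \eqref{Eq: aided-bs-Lya-obj-3}, while the equalities \eqref{Eq: aided-bs-Lya-obj-4} simply eliminate the variables with $s\notin\mathcal{X}_b$. Differentiating with respect to $P_{bm}^{(s)}$ for $s\in\mathcal{X}_b$ and interchanging derivative and expectation --- justified because the integrand's derivative $\frac{h_{bm}^{(s)}}{\sigma^2+I_{bm}^{(s)}+P_{bm}^{(s)}h_{bm}^{(s)}}$ is bounded by $h_{bm}^{(s)}/\sigma^2$ and hence dominated --- yields the stationarity condition
\[
(Q_{bm}+V)\,\mathbb{E}_{I_{bm}^{(s)}}\!\Big[\tfrac{h_{bm}^{(s)}}{\sigma^2+I_{bm}^{(s)}+P_{bm}^{(s)*}h_{bm}^{(s)}}\Big]+\mu_{bm}^{(s)}=\gamma_b .
\]

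The core of the argument is then a case analysis driven by complementary slackness $\mu_{bm}^{(s)}P_{bm}^{(s)*}=0$. When $P_{bm}^{(s)*}>0$ we have $\mu_{bm}^{(s)}=0$, and stationarity reduces exactly to the water-filling equation \eqref{Eq: BS water-willing}. When $P_{bm}^{(s)*}=0$, nonnegativity of $\mu_{bm}^{(s)}$ forces the marginal utility at zero power, namely $\mathbb{E}_{I_{bm}^{(s)}}\big[\tfrac{(Q_{bm}+V)h_{bm}^{(s)}}{\sigma^2+I_{bm}^{(s)}}\big]$, to be at most $\gamma_b$; contrapositively, whenever this marginal utility strictly exceeds $\gamma_b$ the optimal power must be positive and solve \eqref{Eq: BS water-willing}, which is precisely the threshold stated in the lemma. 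Uniqueness of that positive root follows because the left-hand side of \eqref{Eq: BS water-willing} is strictly decreasing in $P_{bm}^{(s)*}$ (each denominator grows with the allocated power), ranging continuously from the marginal utility at zero down to $0$.

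Finally I would pin down $\gamma_b$. Since the marginal utility is strictly positive for every finite power, the objective strictly increases along any direction that raises the total power, so the budget \eqref{Eq: aided-bs-Lya-obj-2} is active at the optimum; equivalently $\gamma_b>0$ and $\sum_{m\in\mathcal{M}_b}\sum_{s\in\mathcal{X}_b}P_{bm}^{(s)*}=|\mathcal{S}|\totalpower{b}$, which is the normalization asserted in the lemma. The main obstacle I anticipate is purely technical --- rigorously justifying the exchange of differentiation and expectation and the resulting continuity and strict monotonicity of the implicitly defined water level $\gamma_b$ --- rather than conceptual, since the underlying KKT structure is standard for this class of sum-power-constrained concave allocation problems.
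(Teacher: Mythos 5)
Your proposal is correct and follows essentially the same route as the paper's proof: recognizing \eqref{Eq: aided-bs-Lya-obj} as a convex program, applying the KKT conditions, and performing the complementary-slackness case analysis that yields the water-filling threshold and the choice of $\gamma_b$. The only cosmetic difference is that the paper first uses monotonicity of the objective to replace the power budget by an equality constraint before invoking KKT, whereas you keep the inequality and deduce its activeness (and $\gamma_b>0$) at the end from the strictly positive marginal utility; your added remarks on Slater's condition, the derivative--expectation interchange, and uniqueness of the positive root are harmless extra rigor (and the interchange is in fact trivial here since the estimated interference distribution is discrete).
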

\begin{proof}
Please refer to Appendix \ref{Lem: opt pow}.
\end{proof}
Finally, denoting $\mathbf{P}_b^{*}=(P_{bm}^{(s)*},m\in\mathcal{M}_b,s\in\mathcal{S})$ and taking into account the set of available sub-carriers, we find the action $\mathbf{P}_b$ in $\mathcal{A}_b$ (with respect to the nearest Euclidean distance $||\mathbf{P}_b-\mathbf{P}_b^{*}||$) as the transmit power.
\begin{remark}\label{Remark: V impact}
From \eqref{Eq: BS water-willing}, we notice that for small $V,$ BS $b$ has high priority of serving  UE $m$ with a large queue length $Q_{bm}.$ 
Analogously, for large $V,$ the BS allocates higher power to the UEs with better links.
In the former case, low latency is achieved with the low DL rate. In the latter case, the BS maximizes throughput while allowing the queues of the UEs with worse links to grow.
\end{remark}
After receiving data from the BS, UEs feedback the aggregate interference to the BS. Then, the BS updates $Q_{bm}(t+1)$ according to \eqref{Eq: Queue-Q} and the estimated interference statistics for the next time slot $t+1$ as per
\begin{multline}
 \textstyle\hat{\Pr}\big(\tilde{I}_{bm}^{(s)};t+1|\boldsymbol{\omega}_b,\tilde{\mathcal{X}}_b\big)
= \frac{\mathbbm{1}\big\{\tilde{I}_{bm}^{(s)}=I_{bm}^{(s)}(t)\big\} }{1+\sum\limits_{\xi=1}^{t}\mathbbm{1}\{\boldsymbol{\omega}_b=\mathbf{h}_b(\xi),\tilde{\mathcal{X}}_b=\mathcal{X}_b(\xi)\}}\notag
\\\textstyle\quad+ \frac{\sum\limits_{\xi=1}^{t}\mathbbm{1}\{\boldsymbol{\omega}_b=\mathbf{h}_b(\xi),\tilde{\mathcal{X}}_b=\mathcal{X}_b(\xi)\}}{1+\sum\limits_{\xi=1}^{t}\mathbbm{1}\{\boldsymbol{\omega}_b=\mathbf{h}_b(\xi),\tilde{\mathcal{X}}_b=\mathcal{X}_b(\xi)\}}\cdot\hat{\Pr}\big(\tilde{I}_{bm}^{(s)};t|\boldsymbol{\omega}_b,\tilde{\mathcal{X}}_b\big),~\forall\,\tilde{I}_{bm}^{(s)},\notag
%
\end{multline}
if $\boldsymbol{\omega}_b=\mathbf{h}_b(\xi)$ and $\tilde{\mathcal{X}}_b=\mathcal{X}_b(\xi).$
 Otherwise,
\begin{align}\notag
 \hat{\Pr}\big(\tilde{I}_{bm}^{(s)};t+1|\boldsymbol{\omega}_b,\tilde{\mathcal{X}}_b\big)
= \hat{\Pr}\big(\tilde{I}_{bm}^{(s)};t|\boldsymbol{\omega}_b,\tilde{\mathcal{X}}_b\big),~\forall\,\tilde{I}_{bm}^{(s)}.
\end{align}
Moreover, the estimated mean arrival and state distribution for the next frame $n+1$ are empirically updated as follows:
\begin{align}
  \hat{\lambda}_b(n+1)&\textstyle=\frac{n\hat{\lambda}_b(n)}{n+1} + \sum\limits_{t\in\mathcal{T}(n)}\sum\limits_{m\in\mathcal{M}_b}\frac{\lambda_{bm}(t)}{(n+1)T_0},\label{Eq: Empirical data mean}
\\ \hat{\Pr}(\tilde{h}_{bm}^{(s)};n+1)  &\textstyle= \frac{a\hat{\Pr}(\tilde{h}_{bm}^{(s)};n) }{n+1}+  \sum\limits_{t\in\mathcal{T}(n)}\frac{\mathbbm{1}\{\tilde{h}_{bm}^{(s)}=h_{bm}^{(s)}(t)\}}{(n+1)T_0},\notag
\\&\hspace{10em}\forall\,\tilde{h}_{bm}^{(s)}\in\mathcal{H}_{bm}^{(s)},\label{Eq: Empirical distribution}
\\ \hat{\Pr}(\tilde{\varsigma};n+1)&\textstyle=\frac{a\hat{\Pr}(\tilde{\varsigma};n)}{n+1}+\frac{\mathbbm{1}\{\tilde{\varsigma}=\varsigma(n)\}}{n+1},~\forall\,\tilde{\varsigma}\in\mathcal{G}.\label{Eq: Empirical penalty distribution}
\end{align}
At the beginning of the next frame $n+1,$ \eqref{Eq: Empirical data mean} is uploaded for both statistics and state realization-based resource allocation scenarios whereas \eqref{Eq: Empirical distribution} and \eqref{Eq: Empirical penalty distribution} are only for the statistics-based resource allocation.

\begin{figure*}[t]
\centering
\subfigure[Statistics-based resource allocation.]
	{\def\svgwidth{\columnwidth}
		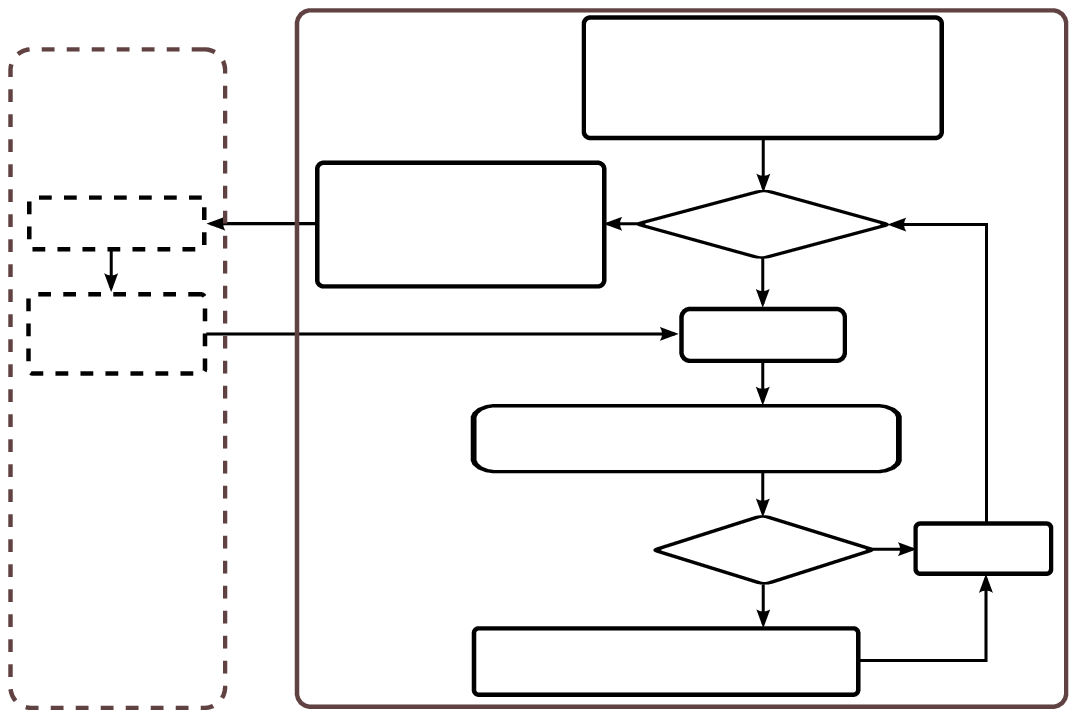}
\subfigure[State realization-based resource allocation.]
	{\def\svgwidth{\columnwidth}
		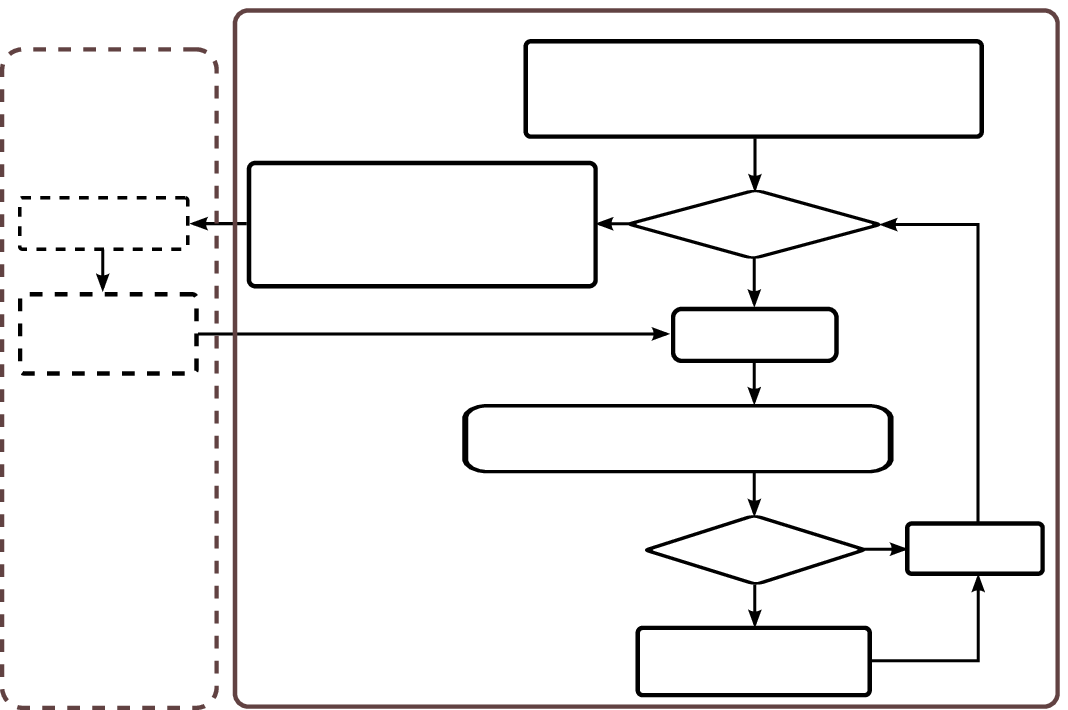}
\caption{Information flow diagrams of the two-timescale software-defined control scenarios.}
\label{Fig: Flow diagram}
\end{figure*}

\begin{remark}
Our proposed  software-defined control scenario operates in two timescales. In the long timescale,  BS $b$ uploads  $\hat{\lambda}_b$ and $\hat{\Pr}(\boldsymbol{\omega}_b$) (resp.~$\{\mathbf{h}_b(t)\}$) for the statistics-based scenario (resp.~state realization-based scenario) every $T_0$ slots. Based on the uploaded information, the controller finds the optimal CCE mapping rules $\{\Psi^{*}_{\rm I}(t)\}$  (resp.~$\Psi^{*}_{\rm II}$) and sends them back to the BSs. In the short timescale, BS $b$ schedules its UEs by solving \eqref{Eq: aided-bs-Lya-obj} in each time slot $t.$
\end{remark}
The information flow diagrams of our proposed two-timescale software-defined control scenarios are shown in Fig.~\ref{Fig: Flow diagram}, in which Fig.~\ref{Fig: Flow diagram}(a) describes the statistics-based approach whereas Fig.~\ref{Fig: Flow diagram}(b) shows the realization-based approach.
\section{Simulation Results}\label{Sec: results}

We evaluate the performance of the proposed software-defined control scenarios. In the simulation, we first consider an indoor office environment with two locally-coupled BSs and two UEs in each cell. For notational simplicity, the first BS and its served UEs  are referred to as BS 1, UE 1, and UE 2. Analogously, we have BS 2, UE 3, and UE 4 for the other cell. For UE 1 and UE 3, the distances to the serving and interfering BSs are 10\,m and 40\,m, respectively. The corresponding distances for UE 2 and UE 4 are 20\,m and 30\,m. Additionally, we consider the 2.4\,GHz carrier frequency with the path loss model $30 \log d +20 \log 2.4+46$~(dB) from \cite{rpt:itu_indoor}, where $d$ in meters is the distance between network entities. All channels experience Rayleigh fading with unit variance. We further assume that the channel gain and the time cost in fronthaul are quantized into two levels. Moreover, the fronthaul SNR measured at the controller is 20\,dB.  The coherence time and each sub-channel bandwidth are 100\,ms and 10\,MHz, respectively. Data arrivals are governed by Poisson processes with mean values $\bar{\lambda}_{11}=\bar{\lambda}_{12}=8$\,Mbps and $\bar{\lambda}_{23}=\bar{\lambda}_{24}=5$\,Mbps.  The other parameters are as follows:  $P_{b}=20$\,dBm, $\forall\,b\in\mathcal{B},$ $P_{\rm C}=25$\,dBm,  $|\mathcal{S}|=2,$ $T_0=10,$ $\mathcal{G} = \{0.25, 0.5\},$ $\sigma^2=-85$\,dBm, $R_{\rm{unit}} = 0.025\log_2 1.05$~(bit/s/Hz), $\kappa=10^4,$ and $V=100.$ To validate the advantage of our two proposed SDN resource allocation approaches, we consider a non-SDN baseline for comparison in which no central controller is deployed, and BSs do not communicate with each other. In the non-SDN baseline, the BS solves problem \eqref{Eq: aided-bs-Lya-obj} with $\mathcal{X}_b(t)= \mathcal{S},$ $\forall\,t\in\mathbb{Z}^{+},b\in\mathcal{B},$ and $\varsigma=0.$

\begin{figure}[t]
\centering
	\includegraphics[width=\columnwidth]{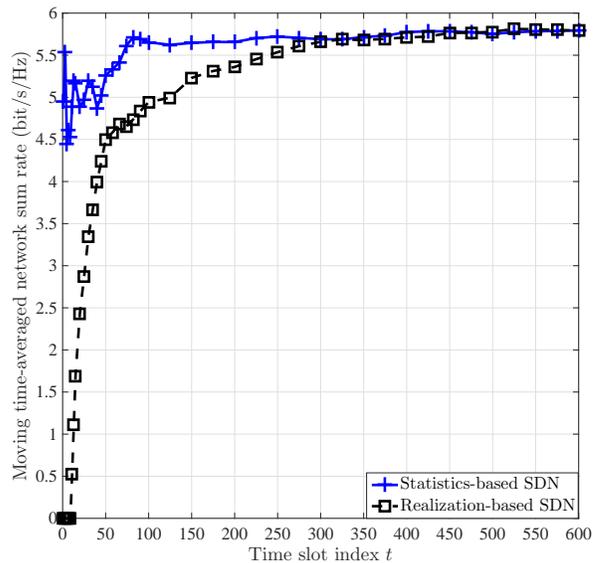}
	\caption{Convergence speed of the proposed controller-aided resource allocation schemes with respect to the moving average network sum rate.}
		\label{Fig:3}
\end{figure}
We first show how the performance of our two proposed controller-aided resource allocation approaches evolves with time. Since all BSs' DL rates and QoS requirements in terms of rates are taken into account in the controller's objective, we focus on the moving time-averaged network sum rate which is defined as $\frac{1}{t}\sum_{\tau=1}^{t}\sum_{b\in\mathcal{B}}\sum_{m\in\mathcal{M}_b}\sum_{s\in\mathcal{S}}R_{bm}^{(s)}(\tau).$ As shown in Fig.~\ref{Fig:3}, the moving average rate of the statistics-based SDN scheme converges faster than the moving average rate of the state realization-based SDN scheme. Specifically, the statistics-based  approach achieves steady-state performance at $t=100$ whereas the convergence of the realization-based  approach is three times slower.  The statistics-based SDN approach has  faster convergence speed  since it leverages the estimated channel distribution which includes more information than the channel realization used by  the realization-based SDN approach. However, owing to the distribution being empirically estimated, the statistics-based approach still needs extra time slots to gather full statistical information to achieve convergence. 

	\begin{figure}[t]
\centering
		\includegraphics[width=\columnwidth]{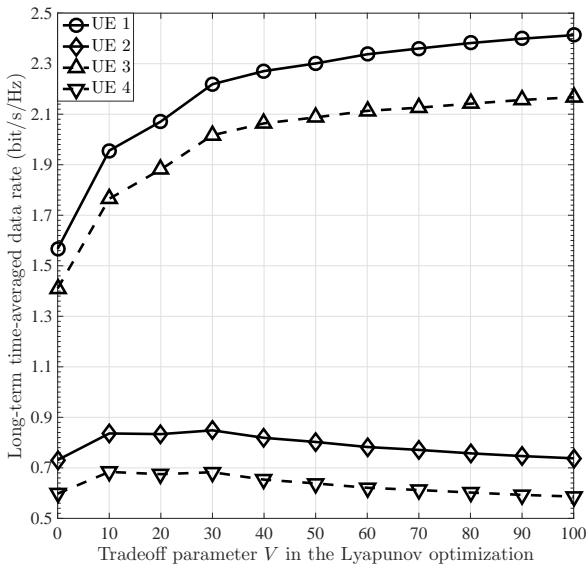}
	\caption{UE's long-term time-averaged data rate versus the Lyapunov optimization tradeoff factor  $V.$}
	\label{Fig:4}
\end{figure}

In Fig.~\ref{Fig:4} and Fig.~\ref{Fig:5}, we  investigate the UE's long-term time-averaged data rate and queue length as the tradeoff control parameter $V$  varies. Before discussing these two figures, we note that the tradeoff between the average rate/queue length and $V$ is similar in the two proposed SDN approaches. Since the performance of the realization-based approach is better due to lighter fronthaul overhead, we consider the realization-based SDN approach to illustrate the tradeoff.\footnote{The performance of the two proposed SDN approaches will be compared in Figs.~\ref{Fig:6}--\ref{Fig:9}.}  When $V=0,$  the BSs focus on queue stabilization. As illustrated in Remark \ref{Remark: V impact}, the shortest queue length is achieved despite UEs' channel quality, which results in the lowest data rate.  As we increase $V,$ the BS focuses more on sum rate maximization. In order to improve the sum rate, the BS allocates  more resources to the UEs that have better channel quality, i.e., UE 1 and UE 3. Hence, in Fig.~\ref{Fig:4}, it can be observed  that UE 1's and UE 3's average rates are further improved as $V$ increases. On the other hand, the average queue lengths of UE 2 and UE 4 grow monotonically with $V.$ Moreover, UE 1 and UE 2 have higher data rates since their larger traffic demands are taken into account in the controller's objective.
Next, we compare the performance of the two proposed software-defined control schemes and the non-SDN scheme. Versus the tradeoff parameter $V,$ we show each BS's long-term time-averaged DL rate and queue length of the three studied schemes in  Fig.~\ref{Fig:6} and Fig.~\ref{Fig:7}, respectively. It  can be observed that the BS's average DL rate and queue length increase with $V$  as per Remark \ref{Remark: V impact}. Additionally, with the aid of the controller's recommendations, our proposed approaches achieve better throughput than the non-SDN scheme. Furthermore,  Fig.~\ref{Fig:6}  illustrates that the proposed SDN schemes achieve higher DL rates at higher traffic demands, i.e. traffic-aware control mechanisms, whereas the  non-SDN scheme is agnostic to traffic demands.  The controller's traffic-aware resource allocation also affects the BSs' average queue length. This is shown in  Fig.~\ref{Fig:7} where BS 1's queue length in the proposed SDN schemes is smaller than in the non-SDN scheme for all $V$ values.
\begin{figure}[t]
\centering
	\includegraphics[width=\columnwidth]{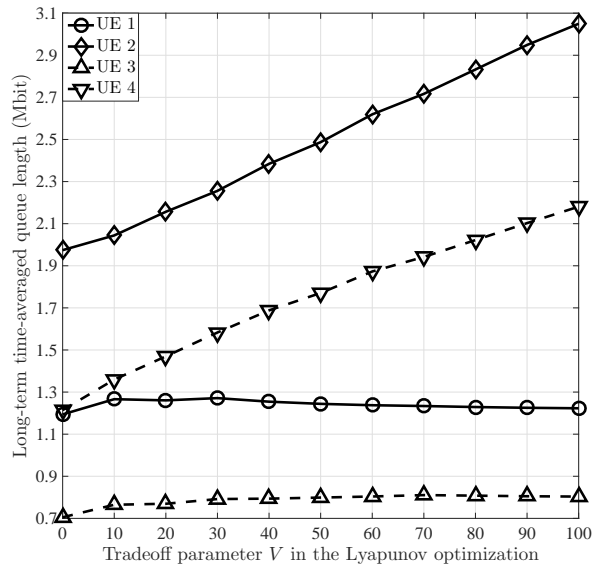}
	\caption{UE's long-term time-averaged queue length versus the Lyapunov optimization tradeoff factor  $V.$}
		\label{Fig:5}
		\end{figure}
\begin{figure}
\centering
		\includegraphics[width=\columnwidth]{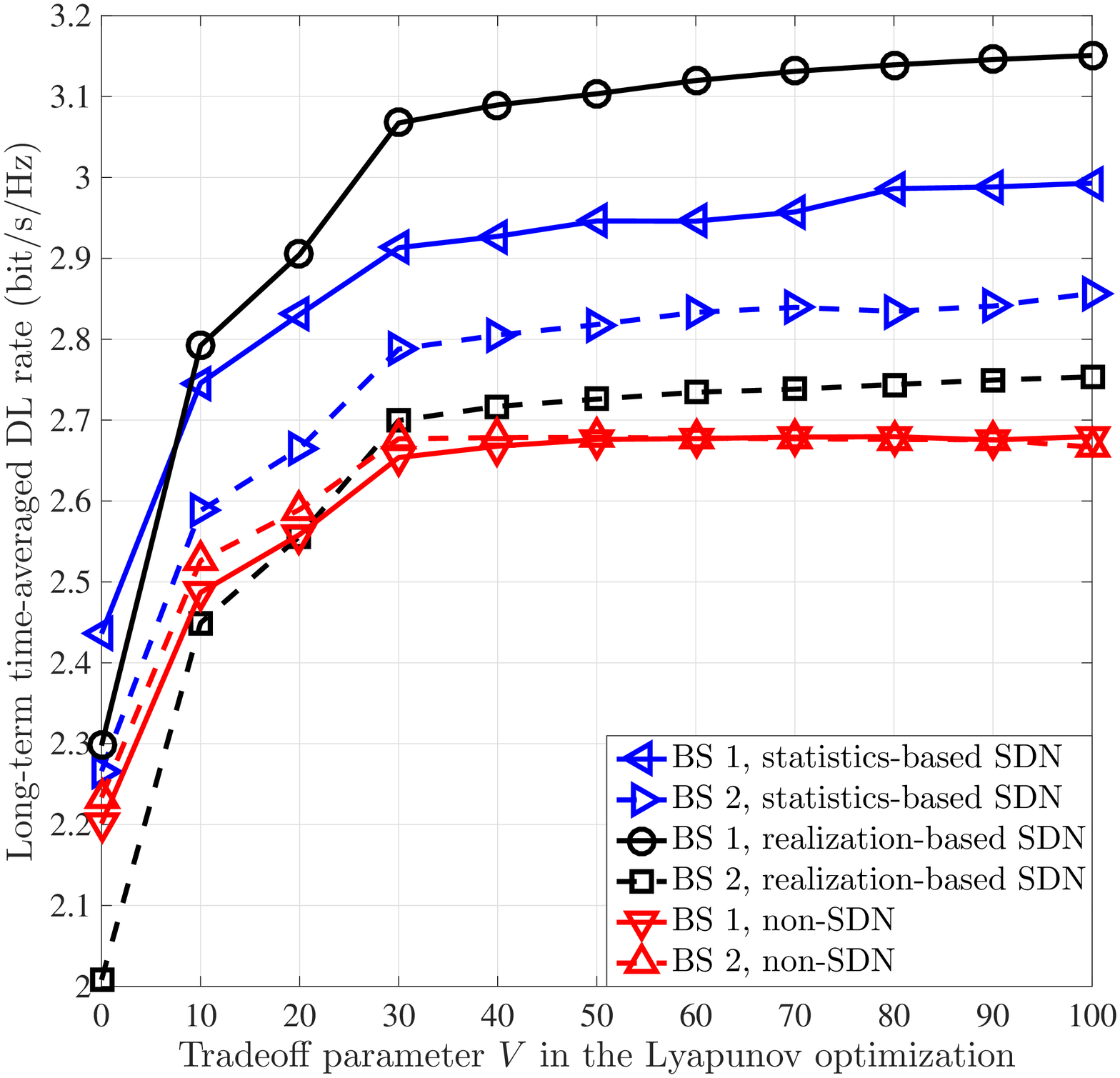}
	\caption{BS's long-term time-averaged DL rate versus the Lyapunov optimization tradeoff factor $V.$}
	\label{Fig:6}
\end{figure}
\begin{figure}[t]
\centering
	\includegraphics[width=\columnwidth]{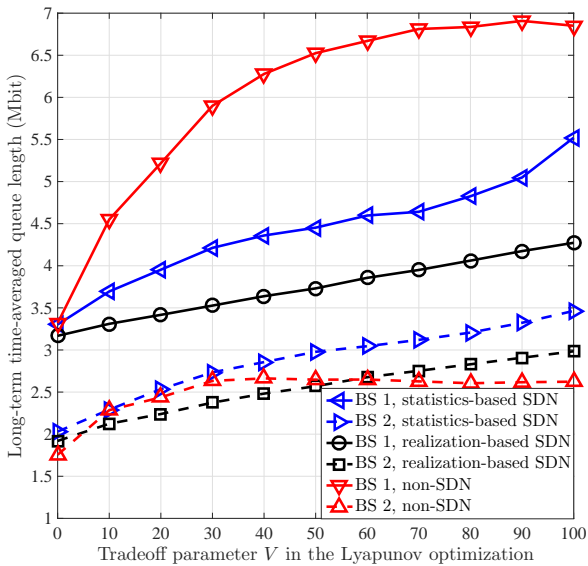}
		\caption{BS's long-term time-averaged queue length versus the Lyapunov optimization tradeoff factor  $V.$}
		\label{Fig:7}
\end{figure}

Fig.~\ref{Fig:8} illustrates the tradeoff between the long-term time-averaged network sum rate and queue length. Given an average network throughput, our proposed SDN approaches have lower average queue lengths than the non-SDN scheme.  In other words, our approaches achieve better delay performance since latency is proportional to the average queue length \cite{Littlelaw}. In this regard, the realization-based SDN approach has  40\% latency reduction over the non-SDN scheme while up to  34\% latency  reduction is achieved by the statistics-based SDN scheme. Moreover, we note that in contrast with the non-SDN baseline, the proposed SDN approaches  achieve higher network throughput while guaranteeing better network latency performance.

\begin{figure}
\centering
		\includegraphics[width=\columnwidth]{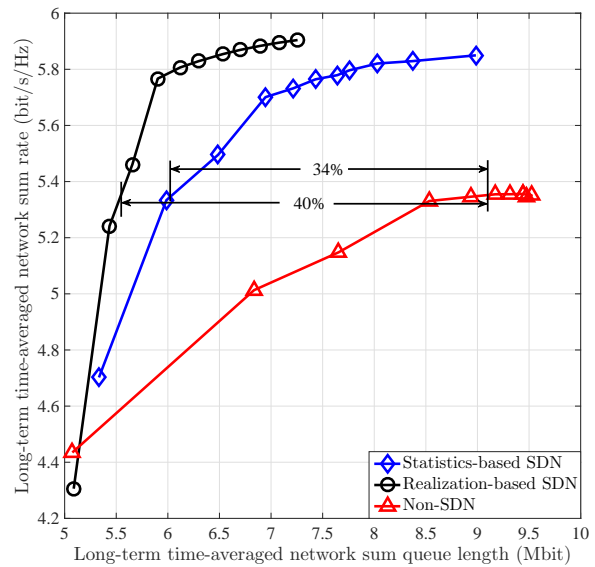}
	\caption{Tradeoff between the long-term time-averaged network sum rate and queue length.}
	\label{Fig:8}
\end{figure}

The impact of fronthaul reliability on the network average throughput and latency performance of the software-defined wireless network is shown in Fig.~\ref{Fig:9}. Since there is no fronthaul in the non-SDN scheme, its performance does not vary with the fronthaul SNR as illustrated  in Fig.~\ref{Fig:9}. For low fronthaul SNR values, the BS in the SDN schemes requires more than  $[\mathcal{G}]_{\rm max}$ time portion, i.e., the maximum element in the set $\mathcal{G},$ to upload its local information to acquire recommendations. Since the controller's recommendations are not available after  $[\mathcal{G}]_{\rm max}$  time portion,  the BS utilizes all sub-carriers as in the non-SDN scheme to schedule the UEs. However, due to the allocated time $[\mathcal{G}]_{\rm max}$ in the fronthaul, the SDN schemes have lower average throughput in contrast with the non-SDN scheme for low fronthaul SNR values. As the fronthaul becomes reliable in terms of increased fronthaul SNR, the throughput gains provided by the controller's recommendations dominate the overhead in the fronthaul. The improved throughput effectively decreases the average queue length. Note that the performance enhancement is more prominent when the fronthaul is more reliable. Furthermore, owing to the lighter burden on the fronthaul, the realization-based SDN scheme achieves  better throughput and latency performance than the statistics-based SDN scheme.

\begin{figure}[t]
\centering
	\includegraphics[width=\columnwidth]{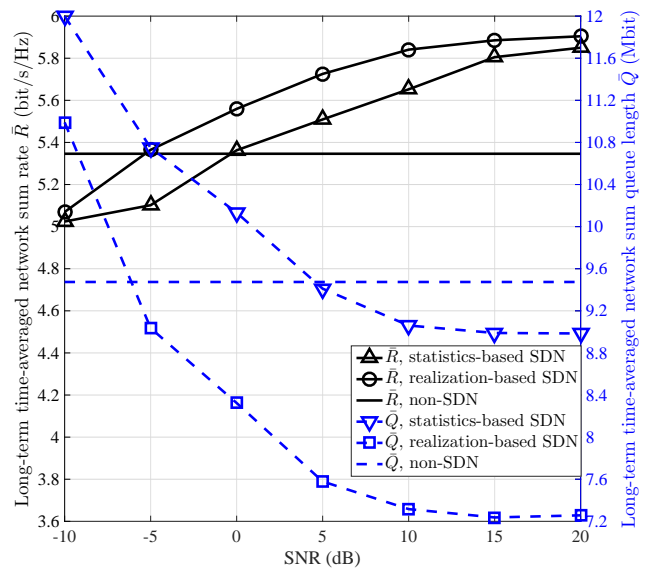}
		\caption{Long-term time-averaged network sum rate and queue length as the fronthaul SNR varies.}
	\label{Fig:9}
	\end{figure}
	
\begin{figure}
\centering
	\includegraphics[width=\columnwidth]{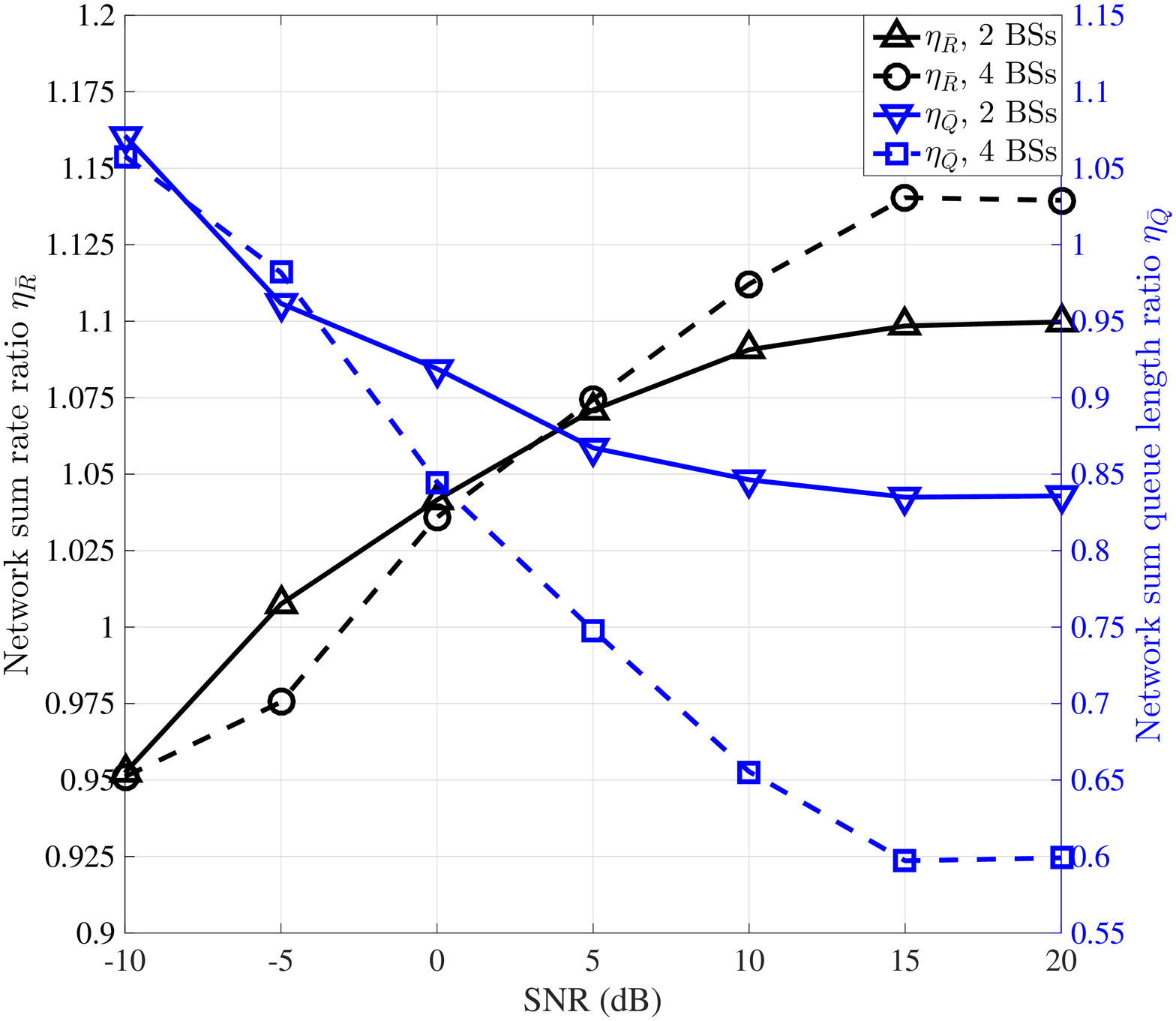}
	\caption{Network sum rate and queue length ratios as the fronthaul SNR varies for different BS densities.}
	\label{Fig:10}
\end{figure}

Finally, Fig.~\ref{Fig:10} shows the benefit of the controller's recommendations when the number of locally-coupled BSs increases. In this comparison with the non-SDN architecture, we consider the realization-based SDN approach due to its higher throughput gain and latency reduction which are manifested in Fig.~\ref{Fig:8} and Fig.~\ref{Fig:9}. Regarding the network setting in Fig.~\ref{Fig:10}, we assume that each BS serves two UEs. One of the two UEs is closer to the serving BS with the 10\,m distance and  the 40\,m distance to all interfering BSs. The distances from the other UE to its serving BS and all interfering distances are 20\,m and 30\,m, respectively. In addition, $\bar{\lambda}_{bm}=7\,\mbox{Mbps},\forall\,b\in\mathcal{B},m\in\mathcal{M}_b.$ 
In order to provide a clear comparison for the rate and latency enhancements, the throughputs and queue lengths of the proposed scheme are normalized by the corresponding rates and queue lengths of the non-SDN scheme.
 In this regard, we denote the metrics  for the throughput  ratio and the queue length ratio as $\eta_{\bar{R}}$ and $\eta_{\bar{Q}},$ respectively. 
When the fronthaul SNR is low, both schemes are unable to obtain the recommendations after spending $[\mathcal{G}]_{\rm max}$ time portion in fronthaul. Thus,  both cases have the same $\eta_{\bar{R}},$ i.e., $(T_0-[\mathcal{G}]_{\rm max})/T_0,$ which results in the same  $\eta_{\bar{Q}}.$
Note that the fronthaul overhead is proportional to the number of BSs as per \eqref{Eq: FH time cost} and \eqref{Eq: FH download cost}. As the fronthaul SNR increases, the two-BS case achieves more throughput gains with less time cost. 
Nevertheless, when the performance gain dominates at a high SNR, the  throughput and latency enhancements brought by the controller recommendations are more prominent for $|\mathcal{B}|=4$. Higher performance gain is achieved since more coupled interference is coordinated by the controller.
In other words, given that the controller's recommendations can be acquired in a timely and reliable manner, the software-defined control scenario is more beneficial when the BS density grows.

\section{Conclusions}\label{Sec: conclusion}
This work has investigated a software-defined control mechanism for RANs taking into account network dynamics and capacity-limited  fronthaul links. 
In the considered network, BSs compete for resources to maximize their DL rates while stabilizing the queue lengths. Due to queue dynamics and the randomness of the wireless channel, BSs' rate maximization problems are cast as dynamic stochastic games that are coordinated by an SDN controller via an in-band wireless fronthaul.
Here, taking into account a time penalty in the fronthaul, we have proposed a two-timescale software-defined control mechanism in which the controller issues CCE recommendations in the slow timescale. To this end,  we have considered two approaches, which trade off the utility convergence speed and the burden on the fronthaul, to find the optimal CCE.
Subsequently, incentivized by the CCE recommendations, i.e., allocated sub-carriers, the BS leverages Lyapunov optimization techniques to schedule its UEs in a low-complexity latency-aware manner at each time slot. 
Numerical results have shown that our proposed SDN approaches simultaneously achieve higher throughput and guarantee lower  latency over the non-SDN baseline. Specifically, up to 40\% latency is reduced with the help of the SDN controller. When the fronthaul becomes reliable, performance enhancement makes the overhead of acquiring controller's recommendations negligible. Moreover, the controller's recommendations are more beneficial as the BS density increases.

\appendices

\section{Proof of Proposition \ref{Prop: Mean rate stability}}
\label{Lem: Mean rate stability}

As the maximum interference channel gain $[\mathcal{H}_{b'm}^{(s)}]_{\rm max}$ is considered in $v_b(\boldsymbol{\omega},\boldsymbol{\alpha}),$ we can straightforwardly find
\begin{equation}\label{Eq: Utility bound}
u_{b}(\boldsymbol{\omega},\boldsymbol{\alpha})\geq v_{b}(\boldsymbol{\omega},\boldsymbol{\alpha}),~\forall\,\boldsymbol{\omega}\in\mathcal{W},\boldsymbol{\alpha}\in\mathcal{A}.
\end{equation}
Taking the expectation of \eqref{Eq: Utility bound} with respect to $\Pr(  \boldsymbol{\omega})\Pr(\boldsymbol{\alpha}|  \boldsymbol{\omega})$ and applying \eqref{Eq: BS's utility} and \eqref{Eq: Auxiliary mean stable}, we obtain
$ \sum_{m\in\mathcal{M}_b}\sum_{s\in\mathcal{S}}\bar{R}_{bm}^{(s)}\geq	\bar{\lambda}_{b}.$

 \section{Proof of Proposition \ref{Prop: epsilon-CCE}}
\label{Lem: epsilon-CCE}

Considering the auxiliary utility $v_b(\boldsymbol{\omega},\boldsymbol{\alpha})$ and the CCE strategy ${\Pr}_{v}(\boldsymbol{\alpha}|  \boldsymbol{\omega})$  with respect to  $v_b(\boldsymbol{\omega},\boldsymbol{\alpha})$, we first rewrite  \eqref{Eq: epsilon-CCE-1} and \eqref{Eq: epsilon-CCE-2} as
\begin{align}
\Pr(\boldsymbol{\omega}_b)\theta_b( \boldsymbol{\omega}_b) &\textstyle\geq  \sum\limits_{\boldsymbol{\omega}\in\mathcal{W}|\boldsymbol{\omega}_b}\sum\limits_{\boldsymbol{\alpha}\in\mathcal{A}}  \Pr(\boldsymbol{\omega}) {\Pr}_{v}(\boldsymbol{\alpha}|  \boldsymbol{\omega}) v_b(\boldsymbol{\omega},\boldsymbol{\chi}_b, \boldsymbol{\alpha}_{-b}), \notag
\\&  \hspace{4.8em}~\forall\,b\in\mathcal{B},\boldsymbol{\omega}_b\in\mathcal{W}_b, \boldsymbol{\chi}_b\in\mathcal{A}_b,\label{Eq: Auxiliary CCE-1}
\\\bar{v}_b&\textstyle\geq
 \sum\limits_{ \boldsymbol{\omega}_b\in\mathcal{W}_b} \Pr( \boldsymbol{\omega}_b)\theta_b( \boldsymbol{\omega}_b),~\forall\,b\in\mathcal{B}.\label{Eq: Auxiliary CCE-2}
\end{align}
Note that given the observed state $\boldsymbol{\omega}_b$, $\theta_b( \boldsymbol{\omega}_b)$ is the maximum utility if BS $b$ deviates from the CCE strategy. Thus, we can express $\theta_b( \boldsymbol{\omega}_b)$ as
\begin{align}
\theta_b( \boldsymbol{\omega}_b)= \max\limits_{\boldsymbol{\chi}_b\in\mathcal{A}_b}\bigg\{ &\textstyle\sum\limits_{\boldsymbol{\omega}\in\mathcal{W}|\boldsymbol{\omega}_b}\sum\limits_{\boldsymbol{\alpha}\in\mathcal{A}}   \frac{\Pr(\boldsymbol{\omega})}{\Pr(\boldsymbol{\omega}_b)}\notag
\\&\times{\Pr}_{v}(\boldsymbol{\alpha}|  \boldsymbol{\omega})v_{b}(\boldsymbol{\omega},\boldsymbol{\chi}_{b},\boldsymbol{\alpha}_{-b})\Big\}.\label{Eq: Auxiliary CCE-max}
\end{align}
Analogous to \eqref{Eq: Auxiliary CCE-max}, we define $\mu_b( \boldsymbol{\omega}_b)$,  for the utility $u_b(\boldsymbol{\omega},\boldsymbol{\alpha})$, as
\begin{align}
\mu_b( \boldsymbol{\omega}_b)&\textstyle\coloneqq\max\limits_{\boldsymbol{\chi}_b\in\mathcal{A}_b}\bigg\{ \sum\limits_{\boldsymbol{\omega}\in\mathcal{W}|\boldsymbol{\omega}_b}\sum\limits_{\boldsymbol{\alpha}\in\mathcal{A}}   \frac{\Pr(\boldsymbol{\omega})}{\Pr(\boldsymbol{\omega}_b)}{\Pr}_{v}(\boldsymbol{\alpha}|  \boldsymbol{\omega})\notag
\\&\hspace{3em} u_{b}(\boldsymbol{\omega},\boldsymbol{\chi}_{b},\boldsymbol{\alpha}_{-b})\Big\},~\forall\,b\in\mathcal{B},\boldsymbol{\omega}_b\in\mathcal{W}_b,\label{Eq: Auxiliary epsilon-CCE-max}
\end{align}
which can be further rewritten  as
\begin{align}
 &\textstyle\Pr(\boldsymbol{\omega}_b)\mu_b( \boldsymbol{\omega}_b)\geq  \sum\limits_{\boldsymbol{\omega}\in\mathcal{W}|\boldsymbol{\omega}_b}\sum\limits_{\boldsymbol{\alpha}\in\mathcal{A}}  \Pr(\boldsymbol{\omega}) {\Pr}_{v}(\boldsymbol{\alpha}|  \boldsymbol{\omega}) \notag
\\&\qquad\textstyle\times   u_b(\boldsymbol{\omega},\boldsymbol{\chi}_b, \boldsymbol{\alpha}_{-b}),~\forall\,b\in\mathcal{B},\boldsymbol{\omega}_b\in\mathcal{W}_b, \boldsymbol{\chi}_b\in\mathcal{A}_b.\label{Eq: Epsilon-CCE proo1}
\end{align}
Additionally, we rewrite \eqref{Eq: Utility bound} as
\begin{align}
u_{b}(\boldsymbol{\omega},\boldsymbol{\alpha})&= v_{b}(\boldsymbol{\omega},\boldsymbol{\alpha})+\delta_{b}(\boldsymbol{\omega},\boldsymbol{\alpha}),\label{Eq: delta}
\end{align}
where $\delta_{b}(\boldsymbol{\omega},\boldsymbol{\alpha})\geq 0.$
Then, using \eqref{Eq: delta} in \eqref{Eq: Auxiliary epsilon-CCE-max}, we obtain
\begin{align}
\mu_b( \boldsymbol{\omega}_b)&\textstyle=\max\limits_{\boldsymbol{\chi}_b\in\mathcal{A}_b}\bigg\{ \sum\limits_{\boldsymbol{\omega}\in\mathcal{W}|\boldsymbol{\omega}_b}\sum\limits_{\boldsymbol{\alpha}\in\mathcal{A}}   \frac{\Pr(\boldsymbol{\omega})}{\Pr(\boldsymbol{\omega}_b)}{\Pr}_{v}(\boldsymbol{\alpha}|  \boldsymbol{\omega})\notag
\\&\qquad\textstyle\times\big[v_{b}(\boldsymbol{\omega},\boldsymbol{\chi}_{b},\boldsymbol{\alpha}_{-b})+\delta_{b}(\boldsymbol{\omega},\boldsymbol{\chi}_{b},\boldsymbol{\alpha}_{-b})\big]\Big\}\notag
\\&\textstyle\geq \max\limits_{\boldsymbol{\chi}_b\in\mathcal{A}_b}\bigg\{ \sum\limits_{\boldsymbol{\omega}\in\mathcal{W}|\boldsymbol{\omega}_b}\sum\limits_{\boldsymbol{\alpha}\in\mathcal{A}}   \frac{\Pr(\boldsymbol{\omega})}{\Pr(\boldsymbol{\omega}_b)}{\Pr}_{v}(\boldsymbol{\alpha}|  \boldsymbol{\omega})\notag
\\&\qquad\textstyle\times v_{b}(\boldsymbol{\omega},\boldsymbol{\chi}_{b},\boldsymbol{\alpha}_{-b})\big\}=\theta_b( \boldsymbol{\omega}_b),\label{Eq: appendix-1}
\end{align}
The inequality  in \eqref{Eq: appendix-1} is obtained due to $\delta_b(\boldsymbol{\omega},\boldsymbol{\alpha})\geq 0$.
We can further express $\mu_b( \boldsymbol{\omega}_b)=\theta_b( \boldsymbol{\omega}_b)+ \epsilon_b( \boldsymbol{\omega}_b)$ with
\begin{multline}
\textstyle \epsilon_b( \boldsymbol{\omega}_b)=-\theta_b( \boldsymbol{\omega}_b)+\max\limits_{\boldsymbol{\chi}_b\in\mathcal{A}_b}\bigg\{
\sum\limits_{\boldsymbol{\omega}\in\mathcal{W}|\boldsymbol{\omega}_b}\sum\limits_{\boldsymbol{\alpha}\in\mathcal{A}}    \frac{\Pr(\boldsymbol{\omega})}{\Pr(\boldsymbol{\omega}_b)} {\Pr}_{v}(\boldsymbol{\alpha}|  \boldsymbol{\omega})
\\\textstyle\times\big[v_b(\boldsymbol{\omega},\boldsymbol{\chi}_{b},\boldsymbol{\alpha}_{-b})
+\delta_b(\boldsymbol{\omega},\boldsymbol{\chi}_{b},\boldsymbol{\alpha}_{-b})\big]\Big\}\geq 0.
\end{multline}
Subsequently, applying  \eqref {Eq: Utility bound} and $\mu_b( \boldsymbol{\omega}_b)=\theta_b( \boldsymbol{\omega}_b)+\epsilon_{b}( \boldsymbol{\omega}_b)$   to \eqref{Eq: Auxiliary CCE-2}, we obtain, $\forall\,b\in\mathcal{B}$,
\begin{multline}
 \textstyle\sum\limits_{\boldsymbol{\omega}\in\mathcal{W}}\sum\limits_{\boldsymbol{\alpha}\in\mathcal{A}}  \Pr(\boldsymbol{\omega}){\Pr}_{v}(\boldsymbol{\alpha}|  \boldsymbol{\omega}) u_{b}(\boldsymbol{\omega},\boldsymbol{\alpha})
 \geq\sum\limits_{\boldsymbol{\omega}_b\in\mathcal{W}_b} \Pr(\boldsymbol{\omega}_{b}) \theta_b( \boldsymbol{\omega}_b)
 \\\geq\sum\limits_{\boldsymbol{\omega}_b\in\mathcal{W}_b}\Pr(\boldsymbol{\omega}_b)\mu_b( \boldsymbol{\omega}_b)-\epsilon,\label{Eq: Epsilon-CCE proof 2}
\end{multline}
where $\epsilon=\max_{b\in\mathcal{B}}\{\sum_{\boldsymbol{\omega}_b\in\mathcal{W}_b}\Pr(\boldsymbol{\omega}_b)\epsilon_b( \boldsymbol{\omega}_b)\}\geq 0.$ Note that ${\Pr}_{v}(\boldsymbol{\alpha}|  \boldsymbol{\omega})$ achieves the $\epsilon$-CCE with respect to $u_{b}(\boldsymbol{\omega},\boldsymbol{\alpha})$ from \eqref{Eq: Epsilon-CCE proo1} and \eqref{Eq: Epsilon-CCE proof 2}.

 \section{Proof of Lemma \ref{lemma power}}
\label{Lem: opt pow}

Since the logarithmic  function in (40a) monotonically increases with $P_{bm}^{(s)}$, we can infer that the optimum of  (40) occurs when $\sum_{m\in\mathcal{M}_b}\sum_{s\in\mathcal{X}_b}P_{bm}^{(s)} = |\mathcal{S}|\totalpower{b}$.
Thus, problem (40) is equivalent to
\begin{IEEEeqnarray*}{cl}
 \underset{ P_{bm}^{(s)}}{\mbox{maximize}}&\textstyle ~~\sum\limits_{m\in\mathcal{M}_b}\sum\limits_{s\in\mathcal{X}_b}(Q_{bm}+V)
\\&\textstyle ~~\qquad\times\mathbb{E}_{I_{bm}^{(s)}}  \bigg[\ln\bigg(1+\frac{P_{bm}^{(s)}h_{bm}^{(s)}}{\sigma^2+ I_{bm}^{(s)}} \bigg)\Big|\boldsymbol{\omega}_b,\mathcal{X}_b\bigg]
\\\mbox{subject to}& \textstyle~~\sum\limits_{m\in\mathcal{M}_b}\sum\limits_{s\in\mathcal{X}_b}P_{bm}^{(s)} = |\mathcal{S}|\totalpower{b},
\\&~~ P_{bm}^{(s)}  \geq 0,~\forall\,m\in\mathcal{M}_b,s\in\mathcal{X}_b,
\end{IEEEeqnarray*}
which is a  convex optimization problem. Subsequently, applying the KKT conditions, the optimal solution $P_{bm}^{(s)*},\forall\,m\in\mathcal{M}_b,s\in\mathcal{X}_b,$ satisfies
\begin{subequations}\label{Eq: KKT}
\begin{numcases}
\textstyle \mathbb{E}_{I_{bm}^{(s)}}  \bigg[\frac{   \big(Q_{bm}+V\big)h_{bm}^{(s) }}  {  \sigma^2+ I_{bm}^{(s)} +P_{bm}^{(s)*}h_{bm}^{(s)}  } \bigg|\boldsymbol{\omega}_b,\mathcal{X}_b\bigg]\notag
\\\hspace{5em}=\gamma_b-\gamma_{bm}^{(s)} ,~\forall\,m\in\mathcal{M}_b,s\in\mathcal{X}_b,\label{Eq: KKT-1}
\\ P_{bm}^{(s)*}  \geq 0,~\forall\,m\in\mathcal{M}_b,s\in\mathcal{X}_b \label{Eq: KKT-2}
\\\gamma_{bm}^{(s)}  \geq 0,~\forall\,m\in\mathcal{M}_b,s\in\mathcal{X}_b\label{Eq: KKT-3}
\\P_{bm}^{(s)*}\gamma_{bm}^{(s)}   = 0 ,~\forall\,m\in\mathcal{M}_b,s\in\mathcal{X}_b,\label{Eq: KKT-4}
\\\textstyle\sum\limits_{m\in\mathcal{M}_b}\sum\limits_{s\in\mathcal{X}_b}P_{bm}^{(s)*} = |\mathcal{S}|\totalpower{b},\label{Eq: KKT-5}
\end{numcases}
\end{subequations}
where $\gamma_b\in\mathbb{R}$  and $\gamma_{bm}^{(s)},\forall\,m\in\mathcal{M}_b,s\in\mathcal{X}_b,$ are the Lagrange multipliers. 
From \eqref{Eq: KKT-2}, \eqref{Eq: KKT-3}, and \eqref{Eq: KKT-4}, we deduce that $P_{bm}^{(s)*}  = 0$ if  $\gamma_{bm}^{(s)}   > 0$. Additionally, $ \gamma_{bm}^{(s)}   = 0$ when $P_{bm}^{(s)*}> 0$.
 Therefore,  if $\mathbb{E}_{I_{bm}^{(s)}}\Big[\frac{(Q_{bm}+V)h_{bm}^{(s)}}{\sigma^2+ I_{bm}^{(s)}}\Big|\boldsymbol{\omega}_b,\mathcal{X}_b \Big]>\gamma_b$, we have  $ \gamma_{bm}^{(s)}   = 0$ and find the optimal power $P_{bm}^{(s)*}> 0$ such that
\begin{align}
\textstyle\mathbb{E}_{I_{bm}^{(s)}}\bigg[\frac{\big(Q_{bm}+V\big)h_{bm}^{(s)}}{\sigma^2+ I_{bm}^{(s)}+P_{bm}^{(s)*}h_{bm}^{(s)}}
 \bigg|\boldsymbol{\omega}_b,\mathcal{X}_b\bigg] =\gamma_b.\label{Eq: KKT water-willing}
\end{align}
Otherwise, $P_{bm}^{(s)*}=0$, and we select a $ \gamma_{bm}^{(s)}   \geq 0$   such that  $\mathbb{E}_{I_{bm}^{(s)}}\Big[\frac{(Q_{bm}+V)h_{bm}^{(s)}}{\sigma^2+ I_{bm}^{(s)}}\Big|\boldsymbol{\omega}_b,\mathcal{X}_b \Big]=\gamma_b- \gamma_{bm}^{(s)} $.  Moreover, $\gamma_b$ affects the power values as per \eqref{Eq: KKT water-willing}. Since the allocated power has to satisfy \eqref{Eq: KKT-5},  $\gamma_b$ is chosen such that $\sum_{m\in\mathcal{M}_b}\sum_{s\in\mathcal{X}_b}P_{bm}^{(s)*} =|\mathcal{S}|\totalpower{b}.$

\bibliographystyle{IEEEtran}
\bibliography{ref}

\end{document}